\newtheorem{property2}{Property}
\Crefname{problem}{Problem}{Problem}
\Crefname{figure}{Fig.}{Fig.}
\let\oldendproof\endproof
\renewcommand\endproof{~\hfill$\qed$\oldendproof}
\title{On 1-bend Upward Point-set\\ Embeddings of $st$-digraphs}
\author{Emilio Di Giacomo\inst{1} \and Henry F\"orster\inst{2} \and Daria Kokhovich\inst{3} \and Tamara Mchedlidze\inst{3} \and\\ Fabrizio Montecchiani\inst{1} \and Antonios Symvonis\inst{4} \and Anaïs Villedieu\inst{5}}
\authorrunning{Di Giacomo et al.}
\institute{
University of Perugia
%, Perugia, Italy 
--
  \email {\{emilio.digiacomo,fabrizio.montecchiani\}@unipg.it}
  \and
  Universit\"at T\"ubingen
  %, T\"ubingen, Germany 
  --
  \email {henry.foerster@uni-tuebingen.de}
  \and
  Utrecth University
  %, Utrecht, the Netherlands 
  --
  \email{t.mtsentlintze@uu.nl}
  \and
  National Technical U. of Athens
  %, Athens, Greece 
  --
  \email{symvonis@math.ntua.gr}
  \and
  TU Wien
  %, Vienna, Austria 
  -- 
  \email{avilledieu@ac.tuwien.ac.at}
}
\begin{document}
\maketitle

%\linenumbers 
\begin{abstract}
    We study the upward point-set embeddability of digraphs on one-sided convex point sets with at most 1 bend per edge. We provide an algorithm to compute a 1-bend upward point-set embedding of outerplanar $st$-digraphs on arbitrary one-sided convex point sets. We complement this result by proving that for every $n \geq 18$ there exists a $2$-outerplanar $st$-digraph $G$ with $n$ vertices and a one-sided convex point set $S$ so that $G$ does not admit a 1-bend upward point-set embedding on $S$.     
%
%    \keywords{point-set embedding \and directed graphs \and outerplanar graphs}
\end{abstract}
\section{Introduction}\label{se:intro}

A \emph{point-set embedding (PSE)} of a planar graph $G=(V,E)$ on a given set of points $S$, with $|S|=|V|$, is a planar drawing $\Gamma$ of $G$ such that every vertex of $G$ is represented by a point of $S$ and each edge is drawn as a polyline connecting its end-vertices; if every edge has at most $b \ge 0$ bends, $\Gamma$ is a $b$-bend PSE. 

Gritzmann et al.~\cite{gmpp-eptvs-91} proved that the class of graphs that admit a PSE without bends along the edges on \emph{every} set of points in general position coincides with the class of outerplanar graphs. Efficient algorithms to compute a PSE with no bends on any \emph{given} set of points in general position exist for outerplanar graphs~\cite{b-eogps-02} and trees~\cite{bms-oaetp-97}. Cabello~\cite{JGAA-132} proved that deciding whether a planar graph admits a PSE without bends on a \emph{given} set of points is NP-complete. When bends are allowed, Kaufmann and Wiese~\cite{JGAA-46} proved that every planar graph admits a PSE on every set of points with at most two bends per edge. 

An \emph{upward point-set embedding (UPSE)} of a directed graph $G=(V,E)$ on a given set of points $S$, with $|S|=|V|$,
is a PSE with the additional property that 
each edge $e$ is represented as a polyline monotonically increasing in the $y$-direction;
also in this case we say that $\Gamma$ is a $b$-bend UPSE if every edge has at most $b$ bends. Clearly, for an UPSE to exist $G$ must be an upward planar graph (and thus it must be a DAG).  Different to the undirected case, a characterization of the upward planar digraphs that admit a UPSE without bends on \emph{every} point set is still missing even for points in convex position. On the other hand, Binucci et al.~\cite{DBLP:journals/comgeo/BinucciGDEFKL10} characterize DAGs that admit a 1-bend UPSE on every \emph{upward one-sided convex} (UOSC) point set, i.e., a convex point set such that the bottommost point and the topmost point are adjacent in the convex hull of $S$; the same class has also been characterized by Heath and Pemmaraju~\cite{DBLP:journals/siamcomp/HeathP99} as the class of graphs that admit an upward 1-page book embedding.
For points in convex position Binucci et al.~\cite{DBLP:journals/comgeo/BinucciGDEFKL10} proved that there exist directed trees that do not admit an UPSE on every convex point set and many partial results exists about the embeddability of specific subclasses of directed trees on point sets with different properties~\cite{DBLP:conf/gd/AngeliniFGKMS10,DBLP:conf/walcom/ArsenevaCKM0PV21,DBLP:journals/comgeo/BinucciGDEFKL10,DBLP:journals/comgeo/KaufmannMS13}. Kaufmann et al.~\cite{DBLP:journals/comgeo/KaufmannMS13} studied the problem of deciding whether an upward planar graph admits an UPSE on a given set of points $S$ and show that the problem can be solved in polynomial time for convex point sets, while it is NP-complete for point sets in general position. Arseneva et al.~\cite{DBLP:conf/walcom/ArsenevaCKM0PV21} proved that the problem remains NP-complete even for trees if one vertex is mapped to a specific point. As for the undirected case, two bends per edge suffice for UPSEs of upward planar graphs on any given~set~of~points~\cite{DBLP:journals/jda/GiordanoLMSW15}.  

The results  about (U)PSEs with zero and two bends naturally motivates the study of (U)PSEs with one bend. Testing whether a (upward) planar graph admits a 1-bend (U)PSE is NP-complete in both the upward and the non-upward variants. Indeed, it is easy to see that a 1-bend (U)PSE on a set of collinear points is, in fact, a 2-page (upward) book embedding and deciding whether a (upward) planar graph $G$ admits a $2$-page (upward) book embedding~is NP-complete both in the non-upward~\cite{DBLP:journals/jct/BernhartK79} and in the upward case~\cite{DBLP:journals/tcs/BekosLFGMR23}. However, this relation between 1-bend (U)PSEs and 2-page (upward) book embeddings relies on the use of collinear points%\fm{Maybe it is up to some sufficiently small perturbation?}
, and thus it does not hold for points in general or in convex position. The following problems are therefore open and worth to investigate. 

\begin{problem}\label{pr:one}
Does every (upward) planar graph admit a $1$-bend (U)PSE on every set of points in general or in convex position?
\end{problem}

\begin{problem}\label{pr:two}
What is the complexity of testing whether a  (upward) planar graph admits a $1$-bend (U)PSE on a given set of points in general or in~convex~position?
\end{problem}

%\medskip
\noindent We study the upward version of \Cref{pr:one} and our contribution is as follows.
% \todo{TM: I see why upward is needed...let's leave it}
% \todo[]{EDG: I called the set of points that we studied \textbf{upward} one-sided convex, where the word upward means that the peak of the point set is on the left (or on the right) and not on the top or bottom; in other words the point set is ``vertical'' and not ``horizontal'' (or oriented in some other direction). In the literature however, set of points exactly like ours are called one-sided convex point set. Do we want to remove upward? }

\begin{compactitem}
\item On the positive side, we show that every $st$-outerplanar graph (i.e., an outerplanar DAG with a single source and a single sink) admits a $1$-bend UPSE on every UOSC point set (\Cref{th:st-outerplanar}). 
\item We give a negative answer to the upward version of \Cref{pr:one} (\Cref{th:counter-ex}). Namely, we prove that for every $n \geq 18$ there exists a $2$-outerplanar $st$-digraph $G$ with $n$ vertices and an UOSC point set $S$ such that $G$ does not admit an UPSE
%\fmr{upward planar PSE or just UPSE? I understood that PSEs are planar by definition.}{fixed} 
on $S$ with at most one bend per edge.%\fmr{Is it one graph or can extended to a family?}{I think it can be extended to a family}
\end{compactitem}

\medskip

\noindent Concerning our second contribution, Di Giacomo et al.~\cite{DBLP:journals/algorithmica/GiacomoDLW06} proved that every two-terminal series-parallel digraph admits a  $1$-bend UPSE on any given set of points. This result has been extended by Mchedlidze and Symvonis~\cite{DBLP:conf/isaac/MchedlidzeS09} to the superclass of $N$-free graphs\footnote{The embedded $N$-graph is shaped like an N, i.e., it contains four vertices $a,b,c,d$ and three edges $(a,b)$, $(c,b)$ and $(c,d)$ such that (1)~$(a,b)$ enters $b$ to the left of $(c,b)$ and (2)~$(c,b)$ exists $c$ to the left of $(c,d)$. An embedded $N$-free graph does not contain the embedded $N$-graph as a subgraph.}. However, there exist $st$-outerplanar graphs that are not $N$-free digraphs (indeed, $st$-outerplanar graphs may contain the forbidden $N$-digraph), and vice-versa. We remark that the study of PSEs is a classical subject of investigation in the Graph Drawing and Computational Geometry literature where different (not necessarily upward) variants have been  studied~\cite{DBLP:journals/jgaa/AngeliniEFKLMTW14,DBLP:journals/tcs/BadentGL08,cr-opse-11,DBLP:journals/ijcga/GiacomoDLMW10,DBLP:journals/tcs/GiacomoGLN20,DBLP:journals/algorithmica/GiacomoLT10,DBLP:journals/comgeo/DujmovicELLLRW13,DBLP:journals/siamdm/DurocherM15,DBLP:journals/jda/FulekT15,DBLP:conf/sofsem/Kaufmann19,DBLP:journals/comgeo/Mchedlidze13,DBLP:journals/gc/PachW01}. In particular, Everett et al.~\cite{DBLP:journals/dcg/EverettLLW10} and Löffler and Tóth~\cite{DBLP:conf/gd/LofflerT15} considered universal point sets for non-upward  1-bend drawings. 

The paper is organized as follows. In \Cref{se:preli} we give preliminary definitions. In \Cref{se:conditions} we prove necessary and sufficient conditions for the existence of a $1$-bend UPSE. In \Cref{se:outerplanar} we describe the construction for outerplanar digraphs, while our negative example is described in \Cref{se:counterexample}. Open problems are in \Cref{se:open}. Proofs marked with ($\star$) are sketched or removed.

\section{Preliminaries}\label{se:preli}

\begin{figure}[tbp]
    \centering
    \subfigure[]{\label{fi:example-a}\includegraphics[width=0.22\textwidth, page=1]{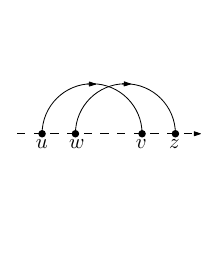}}
    \subfigure[]{\label{fi:example-e}\includegraphics[width=0.22\textwidth, page=5]{figures/example.pdf}}
    \subfigure[]{\label{fi:example-b}\includegraphics[width=0.44\textwidth, page=2]{figures/example.pdf}}
    \centering\subfigure[]{\label{fi:example-c}\includegraphics[width=0.48\textwidth, page=3]{figures/example.pdf}}
    \subfigure[]{\label{fi:example-d}\includegraphics[width=0.44\textwidth, page=4]{figures/example.pdf}}
    \caption{\label{fi:example}
     (a) Two edges that cross; (b) two edges that nest; (c) an example of a 2UBE; (d) an example of a 2UTBE; the bold edges have spine crossings, shown with small crosses; (e) removal of unnecessary sub-edges. }
\end{figure}

Let $G=(V,E)$ be an upward planar graph. A \emph{$2$-page upward book embedding} (2UBE) of $G$ consists of a total order $\prec$ of $V$, that is, a topological sorting of $G$, and of a partition of $E$ into two sets, called \emph{pages}, such that no two edges cross; two edges $(u,v)$ with $u \prec v$ and $(w,z)$ with $w \prec z$ \emph{cross} if the two edges are in the same page and $u \prec w \prec v \prec z$  or $w \prec u \prec z \prec v$ (see \Cref{fi:example-a}). Also, edges $(u,v)$ and $(w,z)$ \emph{nest} if they are on the same page and  $u \prec w \prec z \prec v$ or $w \prec u \prec v \prec z$ (see \Cref{fi:example-e}). We  write $u \preceq v$ if $u$ precedes or coincides with $v$.  A 2UBE can be visualized as an upward planar drawing such that all vertices of $G$ lie along a horizontal line $\ell$, called the \emph{spine}, and each edge is represented as a semi-circle oriented in the direction of the spine and completely contained either above the spine (\emph{top page}) or below the spine (\emph{bottom page}). See \Cref{fi:example-b} for an example of a 2UBE. A \emph{$2$-page upward topological book embedding} (2UTBE) of $G$ is a 2UBE of a subdivision of $G$. 
%\todo{TM: don't we requre that the wavy edges are monotone? if yes, then this definition does not account that} 
When considering a 2UTBE as a planar drawing, each subdivision vertex of an edge $e$ can be regarded as a point where $e$ crosses the spine, and therefore is also called a \emph{spine crossing} (see \Cref{fi:example-c}). Further, each of the ``pieces'' of an edge $e$ defined by the subdivision vertices is called a \emph{sub-edge} of $e$; specifically, the sub-edges that are in the top page are called \emph{top sub-edges} and those that are in the bottom page are called \emph{bottom sub-edges}. We write (sub-)edge to mean an element that is either an edge or a sub-edge. We assume that in a 2UTBE no spine crossing has two incident sub-edges that are in the same page; if so, the two sub-edges can be replaced by a single (sub-)edge (see \Cref{fi:example-d}). A 2UTBE is a \emph{single-top 2UTBE} if each edge has at most one top sub-edge (and hence at most two bottom sub-edges). 
%If a vertex or spine crossing $u$ precedes another vertex or spine crossing $v$ along the spine, we write $u \prec v$; we also write $u \preceq v$ to mean that $u$ precedes or coincides~with~$v$. 

A set of points $S$ is an \emph{upward one-sided convex (UOSC) point set} if the points of $S$ are in convex position and the lowest point of $S$ is adjacent to the highest point of $S$ in the convex hull. See \Cref{fi:counter-example-b} for an illustration. We denote by $CH(S)$ the convex hull of $S$. We always assume that all the points of $S$ are to the left of the line passing through the topmost and the bottommost point.
      
% Let $G$ be an upward planar graph with $n$ vertices and let $S$ be a set of $n$ points. An \emph{upward point-set embedding (UPSE) of $G$ on $S$} is a planar drawing $\Gamma$ of $G$ such that each vertex of $G$ is represented in $\Gamma$ as a point of $S$ and each edge is represented as a curve monotonically increasing in the vertical direction. If every edge is represented as a polyline consisting of at most $b+1$ segments, for some $b \ge 0$\fmr{Added $b\ge 0$}{Thanks!}, we say that $\Gamma$ is a \emph{$b$-bend UPSE}. Every point shared by two consecutive segments of a polyline representing an edge is called a \emph{bend}. 

\section{Conditions for the existence of a $1$-bend UPSE}\label{se:conditions}

%The following lemma shows that the existence of a single-top 2UTBE is a necessary condition for the existence of a $1$-bend UPSE.
We begin with a necessary condition for the existence of a $1$-bend UPSE.

\begin{restatable}[{$\star$}]{lemma}{UPSEnecessity}\label{le:UPSE-necessity}
Let $G=(V,E)$ be an upward planar graph. If $G$ admits a $1$-bend UPSE on an UOSC point set, then $G$ admits a single-top 2UTBE. 
\end{restatable}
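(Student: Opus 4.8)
The plan is to read a single-top $2$UTBE off the given drawing, using an extension of the left convex chain of the point set as the spine. Let $\Gamma$ be the hypothesised $1$-bend UPSE on a UOSC point set $S$, and write $S=\{p_1,\dots,p_n\}$ with $p_1=s$ the bottommost and $p_n=t$ the topmost point. Since $S$ is UOSC, $S$ is exactly the vertex set of $CH(S)$, the points $s$ and $t$ are joined by a hull edge $\sigma=st$ lying on the line $\ell$ through $s$ and $t$, all of $S$ lies weakly to the left of $\ell$, and the remaining part of $\partial CH(S)$ is the polygonal chain $\lambda=p_1p_2\cdots p_n$, which is $y$-monotone by a standard property of convex polygons. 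After an infinitesimal perturbation of $\Gamma$ and $S$ that keeps $\Gamma$ a $1$-bend UPSE on a UOSC point set, we may also assume that all vertices have pairwise distinct $y$-coordinates (so $y_{p_1}<\dots<y_{p_n}$ and $\lambda$ is strictly $y$-monotone), that no bend lies on $\ell$ or on $\lambda$, and that no straight piece of an edge is tangent to or overlaps $\lambda$ or $\sigma$. Finally, prolong $\lambda$ into a simple bi-infinite curve $\tilde\lambda$ by a ray leaving $p_1$ in direction $-\overrightarrow{p_1p_2}$ and a ray leaving $p_n$ in direction $\overrightarrow{p_{n-1}p_n}$; both rays immediately leave the horizontal strip $y_s\le y\le y_t$, which contains all of $\Gamma$, so $\tilde\lambda$ meets $\Gamma$ only along $\lambda$.

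Take $\tilde\lambda$ as the spine. It separates the plane into two regions; call the one containing $\mathrm{int}\,CH(S)$ the \emph{bottom page} and the other the \emph{top page}; note that the open half-plane strictly to the right of $\ell$ lies in the bottom page, since one reaches it from $\mathrm{int}\,CH(S)$ by crossing the relative interior of $\sigma$, which is disjoint from $\tilde\lambda$. For each edge $e$ of $G$, call its \emph{spine crossings} the points where $e$ crosses $\lambda$ transversally in its relative interior, subdivide $e$ there, and label each resulting sub-edge a top or a bottom sub-edge according to which page contains its relative interior. Order $V$ together with all spine crossings along $\tilde\lambda$, obtaining a linear order $\prec$; because $\Gamma$ is upward, every edge and every sub-edge runs from its $\prec$-smaller to its $\prec$-larger endpoint (the spine crossings of an edge, being interior points of a strictly $y$-monotone curve, occur in $\prec$-order strictly between its endpoints), so $\prec$ is a topological sorting of the subdivided graph. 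Straightening $\tilde\lambda$ to a line by a plane homeomorphism turns this data into a drawing of the subdivided graph with all vertices on the spine and every sub-edge inside a single page; two same-page sub-edges whose endpoints interleave along the spine would be forced to cross --- their four endpoints lie in that cyclic order on the boundary of a topological disk --- contradicting planarity of $\Gamma$. Redrawing each sub-edge as a $\prec$-monotone semicircle makes same-page pairs disjoint or nested and so creates no crossing, so this is a $2$UTBE, and at every spine crossing the two incident sub-edges lie in different pages, as the definition requires.

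The core of the proof is to show that this $2$UTBE is \emph{single-top}, that is, that every edge $e=(p_i,p_j)$ has at most one top sub-edge (whence also at most two spine crossings and at most two bottom sub-edges). The only geometric fact needed is that a straight segment with an endpoint on the convex arc $\lambda$ meets $\lambda$ in at most one further point, so it contains at most one spine crossing internally. Split into cases by the position of the (at most one) bend $b$ of $e$. If $e$ is straight it is a chord of $CH(S)$ and its relative interior lies in $\mathrm{int}\,CH(S)$: no spine crossing, one bottom sub-edge. If $b\in\mathrm{int}\,CH(S)$, both pieces $p_ib$ and $bp_j$ have their relative interiors in $\mathrm{int}\,CH(S)$: again no spine crossing, one bottom sub-edge. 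If $b$ lies strictly to the right of $\ell$, a short convexity argument shows that each of $p_ib$, $bp_j$ enters $\mathrm{int}\,CH(S)$ at its endpoint on $\lambda$ and leaves $CH(S)$ through $\sigma$, so meets $\lambda$ only at that endpoint: no spine crossing, one bottom sub-edge. In the remaining case $b$ lies strictly to the left of $\ell$ and outside $CH(S)$, hence in the top page; then the sub-edge containing $b$ is a top sub-edge, the portions of $p_ib$ and of $bp_j$ adjacent to $b$ both belong to it, and each of $p_ib$, $bp_j$ contributes at most one spine crossing, so reading $e$ from $p_i$ to $p_j$ its page pattern is one of \emph{top}, \emph{bottom--top}, \emph{top--bottom}, or \emph{bottom--top--bottom}; the pattern \emph{top--bottom--top} is impossible, as it would force one of the two straight pieces to contain two internal spine crossings. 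In every case $e$ has at most one top sub-edge, so the $2$UTBE is single-top, proving the lemma.

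The last step --- the case analysis of how a one-bend edge can weave across the convex chain $\lambda$, and especially ruling out the pattern \emph{top--bottom--top} --- is expected to be the main obstacle. The rest (choosing the spine, defining $\prec$, and carrying planarity and upwardness over to the book embedding) is routine once the convex geometry of a UOSC point set is set up, and the several degeneracies (bends on $\ell$ or $\lambda$, tangencies, equal $y$-coordinates) are eliminated by the opening perturbation.
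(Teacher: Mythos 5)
Your proof is correct and takes essentially the same route as the paper's: read the book embedding off the drawing using the convex hull boundary as the spine, assign interior pieces to the bottom page and exterior pieces to the top page, and use the fact that a one-bend edge can have at most one piece outside the hull (since any exterior piece must contain the bend). Your write-up is more careful about degeneracies and about pieces straying to the right of the line through $s$ and $t$, but the core argument is the same, so no further comparison is needed.
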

\begin{proof}
	Let $\Gamma$ be a $1$-bend UPSE of $G$ on an UOSC point set $S$. For each edge $e$ of $\Gamma$, replace each intersection point between $e$ and $CH(S)$ that is not an end-vertex of $e$, with a dummy vertex. We obtain a $1$-bend upward planar drawing $\Gamma'$ of a subdivision $G'=(V',E')$ of $G$, such that each edge is drawn completely outside $CH(S)$ or completely inside $CH(S)$. Notice that an edge of $\Gamma'$ that is drawn completely outside $CH(S)$ has necessarily at least one bend, as edges with no bends necessarily lie inside $CH(S)$. Thus, an edge of $\Gamma$ can be split by its intersection points with $CH(S)$ in at most three ``pieces'', at most one of which can be outside $CH(S)$.  
	
	We can define a 2UBE $\gamma'$ of $G'$ as follows. The total order of $V'$ coincides with the bottom to top order of the vertices (real and dummy) in $\Gamma'$; the edges that are drawn inside $CH(S)$ are assigned to the bottom page and those that are drawn outside $CH(S)$ are assigned to the top page. Since $\Gamma'$ is an upward drawing, the total order of $V'$ is a topological sorting. Further, in $\gamma'$, no two edges in the same page cross, as otherwise the same edges would cross in $\Gamma'$. Since $G'$ is a subdivision of $G$, $\gamma'$ is a 2UTBE of $G$. Also, as observed above, each edge $e$ of $G$ is split so that at most one ``piece'' is drawn outside $CH(S)$, hence $e$ has at most one top sub-edge in $\gamma'$.
\end{proof}
%\begin{proof}[sketch]
%Let $\Gamma$ be a $1$-bend UPSE of $G$ on an UOSC point set $S$. For each edge $e$ of $\Gamma$, replace each intersection point between $e$ and $CH(S)$ that is not an end-vertex of $e$, with a dummy vertex. We obtain a $1$-bend upward planar drawing $\Gamma'$ of a subdivision $G'=(V',E')$ of $G$, such that each edge is drawn completely outside $CH(S)$ or completely inside $CH(S)$. An edge of $\Gamma'$ that is drawn completely outside $CH(S)$ has necessarily at least one bend, as edges with no bends necessarily lie inside $CH(S)$. Thus, an edge of $\Gamma$ can be split by its intersection points with $CH(S)$ in at most three ``pieces'', at most one of which can be outside $CH(S)$. Each ``piece'' will be a sub-edge in the bottom page if it is inside $CH(S)$ and in the top page if it is outside $CH(S)$.\end{proof}

Given a 1-bend UPSE $\Gamma$ on an UOSC point set $S$, we say that the 2UTBE $\gamma$ that can be obtained as explained in the proof of \Cref{le:UPSE-necessity} is \emph{induced} by $\Gamma$. 

We now give a sufficient condition for the existence of a $1$-bend UPSE. We begin by introducing some additional definitions and  technical lemmas. Let $\gamma$ be a single-top 2UTBE of an upward planar graph $G$. A sub-edge $(u,v)$ with $u \prec v$ \emph{is nested inside} another sub-edge $(w,z)$ with $w \prec z$ if the two sub-edges are in the same page and $w \preceq u \prec v \preceq z$. 
%\todo{TM: do we need to define nesting for sub-edges, maybe we can integrate this with the definition of nesting edges?} 
Notice that it cannot be that $w=u$ and $v=z$ at the same time. An (sub-)edge $(w,z)$ with $w \prec z$  \emph{covers} a vertex $v$ if $w \prec v \prec z$. Let $e_1$ and $e_2$ be two edges of $G$. Edges $e_1$ and $e_2$ form a \emph{forbidden configuration} in $\gamma$ if the following three conditions hold simultaneously: (a) $e_1$ and $e_2$ both have a top sub-edge, one of the top sub-edges is nested inside the other, and the two sub-edges can possibly share a vertex; (b) $e_1$ and $e_2$ both have a bottom sub-edge, one of the bottom sub-edges is nested inside the other, and the sub-edges do not  share a vertex; and (c) each bottom sub-edge covers at least one vertex and each top sub-edge covers at least two vertices. We have four possible forbidden configurations: \emph{Type 1 forbidden configuration} is such that the top sub-edges do not share a vertex and the two bottom sub-edges precede the two top sub-edges in the direction of the spine  (see \Cref{fi:forbidden-a}); \emph{Type 2 forbidden configuration} is like the Type 1 forbidden configuration but with the top edges that share a vertex  (see \Cref{fi:forbidden-b}). \emph{Type 3} and \emph{Type 4 forbidden configurations} are like Type 1~and~Type~2~respectively, but the top sub-edges precede the bottom sub-edges  (see \Cref{fi:forbidden-c,fi:forbidden-d}). We say that the 7 (or 6) vertices necessary to have a forbidden configuration are \emph{the vertices that define} the forbidden configuration. These are the 4 (or 3) end-vertices~of~the~two~edges forming the forbidden configuration and the three vertices that are covered by their sub-edges.
A single-top 2UTBE is \emph{nice} if it has no forbidden configuration.

\begin{figure}[t]
	\centering
	\subfigure[Type 1]{\label{fi:forbidden-a}\includegraphics[width=0.24\textwidth, page=1]{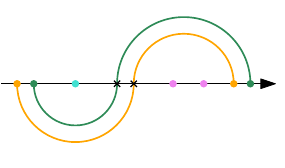}}
	\subfigure[Type 2]{\label{fi:forbidden-b}\includegraphics[width=0.24\textwidth, page=2]{figures/forbidden.pdf}}
	\subfigure[Type 3]{\label{fi:forbidden-c}\includegraphics[width=0.24\textwidth, page=3]{figures/forbidden.pdf}}
	\subfigure[Type 4]{\label{fi:forbidden-d}\includegraphics[width=0.24\textwidth, page=4]{figures/forbidden.pdf}}
	\caption{\label{fi:forbidden.2}
		Forbidden configurations.}
\end{figure}

\begin{figure}[t]
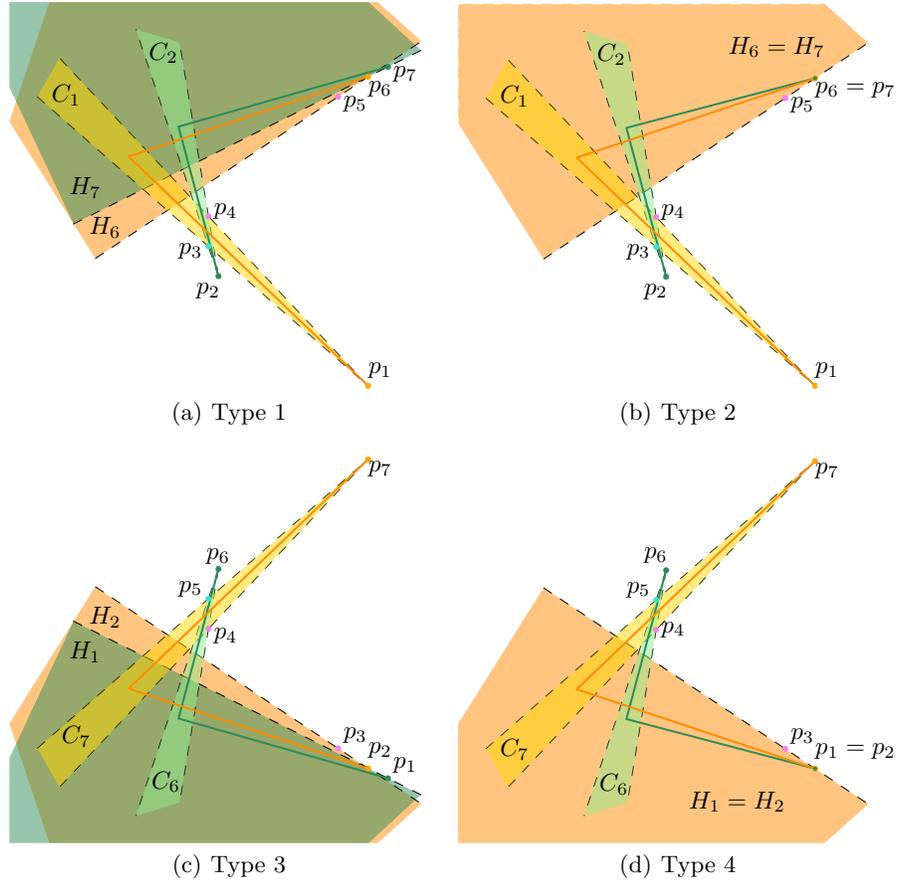

	\centering
	\subfigure[Type 1]{\label{fi:forbidden-e}\includegraphics[width=0.48\textwidth, page=5]{figures/forbidden.pdf}}
	\subfigure[Type 2]{\label{fi:forbidden-f}\includegraphics[width=0.48\textwidth, page=6]{figures/forbidden.pdf}}
	\subfigure[Type 3]{\label{fi:forbidden-g}\includegraphics[width=0.48\textwidth, page=7]{figures/forbidden.pdf}}
	\subfigure[Type 4]{\label{fi:forbidden-h}\includegraphics[width=0.48\textwidth, page=8]{figures/forbidden.pdf}}
	\caption{\label{fi:forbidden.3}
		Impossible point sets.}
\end{figure}

%\begin{figure}[t]
%	\centering
%	\subfigure[Type 1]{\label{fi:forbidden-a}\includegraphics[width=0.32\textwidth, page=1]{figures/forbidden.pdf}}
%	\hfil
%	\subfigure[Type 1]{\label{fi:forbidden-e}\includegraphics[width=0.32\textwidth, page=5]{figures/forbidden.pdf}}
%	\caption{\label{fi:forbidden}
%		(a) A Type 1 forbidden configuration; (b) A Type 1 impossible point set.}
%\end{figure}

The next lemma shows that forbidden configurations are obstacles to the existence of a 1-bend UPSE for specific set of points. We describe 4 types of UOSC point sets, one for each type of forbidden configuration. See \cref{fi:forbidden.3} Let $p_1, p_2,p_3,p_4,p_5,p_6,p_7$ be a set $S$ of points ordered from bottom to top. Denote by $C_i$, with $i \in \{1,2\}$ the cone defined by the two half-lines starting at $p_i$ and passing through $p_3$ and $p_4$, respectively. Also, denote by $H_i$, with $i \in \{6,7\}$ the half plane above the straight line passing through $p_{i-1}$ and $p_i$. Finally, denote by $T_1$ the portion of $C_1$ that does not intersect $H_6$ and by $T_2$ the portion of $C_2$ that does not intersect $H_7$. We say that $T_1$ and $T_2$ cross each other if every segment connecting $p_1$ to the opposite side of $T_1$  crosses every segment that connects $p_2$ to the opposite side of $T_2$. If $S$ is such that $T_1$ and $T_2$ cross, we say that $S$ is a \emph{Type 1 impossible point set} (see \Cref{fi:forbidden-e}). A \emph{Type 2 impossible point set} is like a Type 1 impossible point set, but with $p_6$ and $p_7$ coincident -- in this case the two half planes $H_6$ and $H_7$ are also coincident (see \Cref{fi:forbidden-f}). \emph{Type 3} and \emph{Type 4 impossible points sets} are like Type 1 and Type 2 respectively, but mirrored vertically (see \Cref{fi:forbidden-g,fi:forbidden-h}). 

\begin{lemma}\label{le:forbidden}
	If a single-top 2UTBE $\gamma$ contains a forbidden configuration of Type $i$, with $i \in \{1,2,3,4\}$, then there does not exist a 1-bend UPSE whose induced 2UTBE is $\gamma$ and such that the vertices that define the forbidden configuration are mapped to an impossible point set of Type $i$. 
\end{lemma}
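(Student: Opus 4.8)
\medskip\noindent\textbf{Proof plan.}
The plan is to argue by contradiction, treating Type~1 in detail and reducing the other types to it: Type~2 is the degenerate case $p_6=p_7$, and Types~3 and~4 follow by a vertical reflection. So assume there is a $1$-bend UPSE $\Gamma$ on an UOSC point set $S$ with induced 2UTBE $\gamma$, in which the vertices $p_1\prec\dots\prec p_7$ defining the forbidden configuration lie on a Type~1 impossible point set. Let $e_1,e_2$ be the two edges of the configuration. Using the definition of a Type~1 forbidden configuration I would first fix the combinatorics: $e_1=(p_1,p_6)$ and $e_2=(p_2,p_7)$; the bottom sub-edge of $e_2$ is nested in that of $e_1$ and covers $p_3$; and the top sub-edge of $e_1$ is nested in that of $e_2$ and covers $p_4$ and $p_5$.

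\emph{Step 1: locating the bends.} Since $S$ is in convex position, the straight segment between any two of its points lies inside $CH(S)$, so an edge drawn partly outside $CH(S)$ must use its unique bend to do so. A short case analysis on the position of that bend then shows that, for $j\in\{1,2\}$, the bend $r_j$ of $e_j$ lies strictly outside $CH(S)$, the bottom sub-edge of $e_j$ is the initial straight sub-segment from the lower endpoint of $e_j$ to the point $q_j$ where $e_j$ first meets $\partial CH(S)$, and the top sub-edge of $e_j$ is the two-segment polyline $q_j\to r_j\to v_j$, where $v_1=p_6$ and $v_2=p_7$.

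\emph{Step 2 (the crux): $r_1\in T_1$ and $r_2\in T_2$.} For $r_1\in C_1$: the chord $p_1q_1$ covers $p_3$ and the top sub-edge of $e_1$ covers $p_4$, which, together with the fact that $e_1$ has a single spine crossing, should place $q_1$ on $\partial CH(S)$ strictly between $p_3$ and $p_4$; convexity and the one-sided assumption on $S$ then force the ray from $p_1$ through $q_1$ — which contains $r_1$ — to separate $p_3$ from $p_4$, i.e.\ to enter the cone $C_1$. For $r_1\notin H_6$: the top sub-edge of $e_1$ is $y$-monotone, is drawn outside $CH(S)$, and is nested inside the top sub-edge of $e_2$, which has $p_7$ as its upper endpoint and covers both $p_5$ and $p_6$; monotonicity together with planarity should then forbid $r_1$ from lying above the line through $p_5$ and $p_6$. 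Hence $r_1\in C_1\setminus H_6=T_1$, and the symmetric argument gives $r_2\in T_2$. Turning the purely combinatorial ``covers at least one / at least two vertices'' and nesting conditions into these geometric containments, while correctly handling the spine crossings $q_j$ (which are not vertices of $G$) lying on $\partial CH(S)$, is the technical heart of the proof and, I expect, the main obstacle.

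\emph{Step 3: the contradiction.} By Step~1, $e_1$ is the polyline $p_1\to r_1\to p_6$; since $T_1$ is convex, contains $p_1$ and $r_1$, and has $p_6$ on the side opposite its apex $p_1$, this polyline lies in $T_1$ and joins $p_1$ to that opposite side. Likewise $e_2$ lies in $T_2$ and joins $p_2$ to the side of $T_2$ opposite $p_2$. Because $p_1,\dots,p_7$ form a Type~1 impossible point set, $T_1$ and $T_2$ cross; a short topological lemma then upgrades the segment-crossing property in that definition to the statement that any two curves of the above kind must intersect — each such curve splits its triangle into the two parts containing its lateral sides, and the crossings of the lateral segments guaranteed by the definition pin down how the two triangles interleave. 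Hence $e_1$ and $e_2$ cross in $\Gamma$, contradicting its planarity, which completes the Type~1 case; the remaining types follow as indicated above.
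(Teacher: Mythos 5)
Your overall strategy (argue by contradiction, locate the single bend $r_j$ of $e_j$ outside $CH(S)$ on the ray from the lower endpoint through the spine crossing $q_j$, reduce Types 2--4 to Type 1 by degeneration and reflection) is the same as the paper's, and your Step 1 is correct. The genuine problem is the conclusion of your Step 2, which you yourself flag as the crux: you claim $r_1\notin H_6$, hence $r_1\in T_1=C_1\setminus H_6$. This is backwards. The last segment of $e_1$ runs from $r_1$ to $p_6$ and must stay outside $CH(S)$; since $p_6$ is a hull vertex whose lower neighbour along the hull is at or below $p_5$, the direction from $p_6$ to $r_1$ must lie in the exterior wedge at $p_6$, which (within the cone $C_1$) means $r_1$ lies on the \emph{outer} side of the supporting line of the hull edge through $p_5$ and $p_6$ --- and that outer side is exactly $H_6$. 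Note that ``above the line through $p_5$ and $p_6$'' is not ``higher than $p_6$'': that line has the whole hull on one side, so $H_6$ sweeps far down and to the left of the point set, and $y$-monotonicity does not exclude it. If instead $r_1$ lay in $T_1$, which is wedged between the hull edge $p_3p_4$ and the line $p_5p_6$, the segment $r_1p_6$ would re-enter $CH(S)$, contradicting the fact that the top sub-edge is drawn outside the hull. The paper accordingly places the bend in $C_1\cap H_6$, i.e.\ \emph{beyond} the triangle $T_1$.

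This inversion also undermines your Step 3. With $r_1\in C_1\cap H_6$, the portion of $e_1$ inside $T_1$ is a single straight segment from the apex $p_1$ to the opposite side of $T_1$ (since $p_1$, $q_1$ and $r_1$ are collinear, $q_1\in T_1$ and $r_1$ is past the line bounding $H_6$), so the definition of ``$T_1$ and $T_2$ cross'' applies verbatim and no topological upgrade from segments to curves is needed. Under your placement the argument does not go through even formally: $p_6$ does not lie in the cone $C_1$ at all (the chord from $p_1$ through $p_4$ separates $p_5,p_6,p_7$ from the cone), so the polyline $p_1\to r_1\to p_6$ cannot ``lie in $T_1$'', and with $r_1$ inside $T_1$ the segment $r_1p_6$ could leave $T_1$ through a lateral side rather than the opposite side, in which case no crossing is forced. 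The fix is to prove $r_1\in C_1\cap H_6$ and $r_2\in C_2\cap H_7$ and then conclude directly as above.
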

\begin{proof}
	Assume that $\gamma$ has a Type 1 forbidden configuration (the other cases are analogous). Denote the two edges forming the forbidden configuration as $e_1$ and $e_2$, with the top sub-edge of $e_1$ nested inside the top sub-edge of $e_2$. Suppose that an UPSE exists whose induced 2UTBE is $\gamma$ and such that the vertices of the forbidden configuration are mapped to the points of a Type 1 impossible point set. Then both $e_1$ and $e_2$ have one bend; the bend of $e_1$ is a point of $C_1 \cap H_6$, and the one of $e_2$ is a point of $C_2 \cap H_7$. This implies that the portion of $e_1$ drawn inside $T_1$ crosses the portion of $e_2$ drawn inside $T_2$ (see \Cref{fi:forbidden-a,fi:forbidden-e}).%	
%	
%	Assume that $\gamma$ has a Type 1 forbidden configuration (there other cases are analogous) and let $v_1,v_2,\dots,v_n$ be the sequence of vertices along the spine. Denote by $v_{j_i}$ with $i=1,2,\dots,7$, the vertices that define the forbidden configuration of Type 1). 
%	Denote the two edges forming the forbidden configuration as $e_1$ and $e_2$, with $e_1$ being the one whose top sub-edge is nested inside the other. Consider now a set of point $p_1, p_2, \dots, p_n$, in bottom to top order, such that the subset $\{p_{j_i}|i=1,2,\dots,7\}$ forms a Type 1 impossible point set.  If a UPSE $\Gamma$ whose induced 2UTBE is $\gamma$ existed, then $v_{j_i}$ is mapped to $p_{j_i}$, for $i=1,\dots 7$, and both $e_1$ and $e_2$ have to be drawn with one bend; the bend of $e_1$ has to be a point of $C_1 \cap H_6$, while the bend of $e_2$ has to be a point of $C_2 \cap H_6$. This implies that the portion of $e_1$ drawn inside $T_1$ crosses the portion of $e_2$ drawn inside $T_2$ (see \Cref{fi:forbidden-a,fi:forbidden-e}).
\end{proof}

In the rest of this section, we prove that if $G$ has a nice single-top 2UTBE then it admits a 1-bend UPSE on every UOSC point set. Let $S$ be an UOSC point set of size $n$. Let $\gamma$ be a 2UTBE of an $n$-vertex upward planar graph $G$ and let $v_1, v_2, \dots,v_{n'}$ be the sequence of vertices along the spine obtained by replacing each spine crossing with a dummy vertex. An \emph{enrichment of $S$ consistent with $\gamma$} is an UOSC point set $S'$ such that: (i) $S \subset S'$; (ii) $|S'|=n'$; and (iii) if we denote by $p_1, p_2, \dots, p_{n'}$ the points of $S'$ in bottom-to-top order, then $p_i$ is a dummy point if and only if $v_i$ is a dummy~vertex.~See~\Cref{fi:construction}.

Let $\gamma$ be a single-top 2UTBE of an upward planar graph $G$, and let $\gamma'$ be the 2UBE obtained by replacing the spine crossings of $\gamma$ with dummy vertices and let $\gamma'_{top}$ be the 1-page book embedding obtained by $\gamma'$ considering only the top page; we call $\gamma'_{top}$ the \emph{top-reduction} of $\gamma$. See \Cref{fi:construction-b}. 
Let $S$ be an $n$-point one-sided convex point set and let
$S'=\langle p_1,p_2,\dots,p_{n'} \rangle$ be an enrichment of $S$ consistent with $\gamma$. We assign to each dummy vertex $v_i$ in $\gamma'_{top}$ a slope $\sigma$, which has to be used to draw the segment incident to the dummy vertex. If $v_i$ is adjacent to a vertex $v_j$ (real or dummy) with $j > i$, then $\sigma$ is a slope of the II-IV quadrant defined by the Cartesian axes, while if $v_i$ is adjacent to a vertex $v_j$ (real or dummy) with $j < i$, then $\sigma$ is a slope of the I-III quadrant. In either case the value of $\sigma$ must be smaller, in absolute value, than the slope of any segment $\overline{p_kp_{k+1}}$ for $k=i,i+1,\dots,j-1$ if $i < j$ or for $k=j,j+1,\dots,i-1$ if $i > j$. Such a choice of slopes is called a \emph{slope assignment} for $\gamma'_{top}$. Let $e_1$ and $e_2$ be two sub-edges with at least one dummy end-vertex each and such that $e_2$ is nested inside $e_1$. The slope assignment  is \emph{good for $e_1$ and $e_2$} if for any two slopes $\sigma_1$ assigned to $e_1$ and $\sigma_2$ assigned to $e_2$ in the same quadrant we have~$|\sigma_1| < |\sigma_2|$. The slope assignment is \emph{good} if it is good for every pair of nested sub-edges.

\begin{figure}[t]
    \centering
    \subfigure[]{\label{fi:construction-a}\includegraphics[width=0.4\textwidth, page=1]{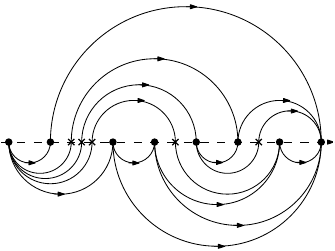}}
    \hfil
    \subfigure[]{\label{fi:construction-b}\includegraphics[width=0.4\textwidth, page=2]{figures/construction.pdf}}
    \hfil
    \subfigure[]{\label{fi:construction-c}\includegraphics[width=0.4\textwidth, page=3]{figures/construction.pdf}}
    \hfil
    \subfigure[]{\label{fi:construction-d}\includegraphics[width=0.4\textwidth, page=4]{figures/construction.pdf}}
    \caption{\label{fi:construction}
     (a) A single-top 2UTBE $\gamma$; (b) the top-reduction $\gamma'_{top}$ of $\gamma$; (c) an enrichment of an UOSC point set $S$ (black squares) consistent with $\gamma$ with a good slope assignment. (d) A $1$-bend UPSE of $\gamma'_{top}$ on $S'$ computed as in  \Cref{le:nice-assignment}. }
\end{figure}

\begin{restatable}[{$\star$}]{lemma}{niceAssignment}\label{le:nice-assignment}
Let $G$ be an $n$-vertex upward planar graph, let $S$ be an UOSC point set. Let $\gamma$ be a single-top 2UTBE of $G$ and let $\gamma'_{top}$ be the top-reduction~of~$\gamma$. If a good slope assignment is given, then $\gamma'_{top}$ has a 1-bend UPSE on every enrichment $S'$ of $S$ consistent with $\gamma$ such that all the edges are drawn outside $CH(S')$ and the segment incident to each dummy vertex is drawn with~the~assigned~slope.
\end{restatable}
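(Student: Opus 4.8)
The goal is to realize the top page of the single-top 2UTBE $\gamma$ on the enriched point set $S'$, drawing every (sub-)edge outside $CH(S')$ and respecting the prescribed slopes at dummy vertices. Since $\gamma'_{top}$ is a 1-page book embedding, its (sub-)edges form a laminar (nesting) family with respect to the linear order $p_1 \prec \dots \prec p_{n'}$. The plan is to process the (sub-)edges of $\gamma'_{top}$ from the innermost to the outermost in this nesting order and to draw each one as a two-segment polyline lying in the region outside $CH(S')$ but ``close'' to the boundary, so that deeper (sub-)edges are enclosed by shallower ones. Concretely, a (sub-)edge $(p_i,p_j)$ with $i<j$ will be drawn with a bend placed in the wedge-like region to the right of the chain $p_i,\dots,p_j$; the first segment leaves $p_i$ with the assigned slope (if $p_i$ is dummy) and the second reaches $p_j$ with the assigned slope (if $p_j$ is dummy), while at a real end-vertex the incident segment may have any sufficiently flat slope. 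The slope condition in the definition of a slope assignment guarantees that each such segment, near its endpoint, stays outside $CH(S')$ and does not cross the convex-hull edges $\overline{p_kp_{k+1}}$ it ``spans''.

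\textbf{Key steps.} First I would fix, for each (sub-)edge $(p_i,p_j)$, a placement of its bend point $b_{ij}$: choose $b_{ij}$ outside $CH(S')$, on the far side of the chord $\overline{p_ip_j}$ from the point set, and at distance from $CH(S')$ that is monotonically increasing with the nesting depth — outer (sub-)edges get bends farther from the hull. Second, I would verify that the polyline $p_i \to b_{ij} \to p_j$ is $y$-monotone: this is where the slope assignment is used, since the assigned slopes at dummy vertices are flatter (in absolute value) than every hull edge $\overline{p_kp_{k+1}}$ that the (sub-)edge spans, so the polyline climbs monotonically from $p_i$ to $p_j$; at real endpoints we may pick the incident slope as flat as needed. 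Third, I would prove planarity: two (sub-)edges of $\gamma'_{top}$ are either disjoint in the order or nested, never crossing; for disjoint (sub-)edges the drawings can be separated because each stays in a bounded neighborhood of the corresponding arc of $CH(S')$; for nested (sub-)edges $e_2$ nested inside $e_1$, the ``good'' part of the slope assignment ($|\sigma_1| < |\sigma_2|$ for slopes in the same quadrant) forces the polyline of $e_1$ to stay outside the polyline of $e_2$ near the shared or interleaved endpoints, while the increasing-depth choice of bend distances handles the rest. Finally, I would note that the incident segment at each dummy vertex is drawn with exactly its assigned slope by construction, and all (sub-)edges lie outside $CH(S')$, giving the claimed 1-bend UPSE of $\gamma'_{top}$.

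\textbf{Main obstacle.} The delicate point is the planarity argument for nested (sub-)edges that \emph{share} an endpoint, and more generally the interaction at dummy vertices where the slope is forced: one cannot freely reroute near such a vertex, so one must argue that the inequality $|\sigma_1|<|\sigma_2|$ — together with the fact that both incident segments emanate from the same point into the same quadrant — already implies the inner polyline is ``below/right of'' the outer one in a neighborhood of that vertex, and then propagate this separation along the whole polylines using the monotone ordering of bend distances. Making the ``neighborhood of the endpoint'' and ``bend sufficiently far out'' quantifiers consistent across all $O(n)$ (sub-)edges simultaneously — i.e., choosing all the constants in the right order — is the part that requires care, though each individual inequality is elementary planar geometry. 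I expect the write-up to proceed by an induction on the nesting depth, at each level invoking the induction hypothesis that the already-drawn inner (sub-)edges occupy a thin region hugging a sub-arc of $CH(S')$, and then placing the current (sub-)edge's polyline just outside that region.
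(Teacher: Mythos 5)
Your proposal is correct and follows essentially the same strategy as the paper's proof: process the (sub-)edges from innermost to outermost in the nesting order, draw each as two segments meeting at a bend outside $CH(S')$, use the flatness of the assigned/chosen slopes relative to the hull segments to keep the drawing outside the hull, and use the goodness condition $|\sigma_1|<|\sigma_2|$ to separate nested polylines. The paper achieves the nesting separation purely by requiring each edge's slopes to be flatter than those of all edges nested inside it, so your additional device of bend distances increasing with nesting depth is harmless but not needed.
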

%\begin{proof}[sketch] Let $v_1, v_2, \dots,v_{n'}$ be the vertices in $\gamma'_{top}$ according to the spine order. Let $S'=\langle p_1,p_2,\dots,p_{n'} \rangle$ be an enrichment of $S$ consistent with $\gamma$; see  \Cref{fi:construction}. An edge is drawn only after all edges nested inside it are already drawn. Let $e=(v_i,v_j)$ be the current edge and suppose that $i < j$, i.e., that $v_i \prec v_j$. The edge $e$ is drawn as the union of two segments: $s_i$ incident to $v_i$ and $s_j$ incident to $v_j$. 
%The segment $s_i$ is drawn in the II quadrant, while segment $s_j$ is drawn in the III quadrant. This guarantees that $s_i$ and $s_j$ meet at a bend point.
%If $v_i$ (resp. $v_j$) is a dummy vertex, then $s_i$ (resp. $s_j$) is drawn with the slope assigned to $v_i$ (resp. $v_j$). Notice that the slope assigned to $v_i$ (resp. $v_j$) is a slope of the II-IV quadrant (resp. I-III quadrant). If $v_i$ (resp. $v_j$) is a real vertex, then $s_i$ (resp. $s_j$) is drawn with a slope $\sigma$ of the II-IV quadrant (resp. I-III quadrant); the  absolute value of $\sigma$ has to be smaller than the absolute value of any other slope used by the already drawn edges nested inside $(v_i,v_j)$. If no edge is nested inside $(v_i,v_j)$, then $|\sigma|$ has to be smaller than the absolute value of the slope of any segment $\overline{p_kp_{k+1}}$, for $k=i,i+1\dots,j-1$. It is easy to see that all the edges are drawn outside $CH(S')$ and no crossing exists.\end{proof}

\begin{proof} Let $v_1, v_2, \dots,v_{n'}$ be the vertices in $\gamma'_{top}$ according to the spine order. Let $S'=\langle p_1,p_2,\dots,p_{n'} \rangle$ be an enrichment of $S$ consistent with $\gamma$. (See \Cref{fi:construction-c}).
	
	The edges are drawn according to an order defined by the nesting, i.e., an edge is drawn only after that all edges nested inside it are already drawn. Let $e=(v_i,v_j)$ be the current edge to be drawn and suppose that $i < j$, i.e., that $v_i \prec v_j$. The edge $e$ is drawn as the union of two segments: $s_i$ incident to $v_i$ and $s_j$ incident to $v_j$. 
	The segment $s_i$ is drawn in the II quadrant, while segment $s_j$ is drawn in the III quadrant. This guarantees that $s_i$ and $s_j$ can meet at a bend point.
	If $v_i$ (resp. $v_j$) is a dummy vertex, then $s_i$ (resp. $s_j$) is drawn with the slope assigned to $v_i$ (resp. $v_j$). Notice that the slope assigned to $v_i$ (resp. $v_j$) is a slope of the II-IV quadrant (resp. I-III quadrant). If $v_i$ (resp. $v_j$) is a real vertex, then $s_i$ (resp. $s_j$ ) is drawn with a slope $\sigma$ of the II-IV quadrant (resp. I-III quadrant) and such that the value of $\sigma$ is smaller, in absolute value, than the value of any other slope used in the edges nested inside $(v_i,v_j)$ (which have already been drawn). If no edge is nested inside $(v_i,v_j)$, then $|\sigma|$ has to be smaller than the absolute value of the slope of any segment $\overline{p_kp_{k+1}}$, for $k=i,i+1\dots,j-1$. 
	
	The slopes used to draw the edges are such that all the segments are drawn outside $CH(S)$ except for an endpoint for each segment, which coincides with a point of $S'$. This is true because every segment $s$ of an edge is drawn with a slope (assigned or chosen by the algorithm) that is smaller, in absolute value, than the slope of any segment of $CH(S)$ that can potentially intersect $s$. Moreover, the slopes of the two segments of an edge $e$ are smaller, in absolute value, than the slopes of all the segments of the edges nested inside $e$. This implies that there is no edge crossing.
\end{proof}

\begin{restatable}[{$\star$}]{lemma}{leCoverOne}\label{le:cover-1}
	Let $G$ be an $n$-vertex upward planar graph, let $\gamma$ be a single-top 2UTBE of $G$, and let $e$ be a top sub-edge that covers exactly one vertex and that has no top sub-edge nested inside it. Let $\gamma'$ be the 2UTBE obtained from $\gamma$ by removing the edge $e'$ containing the sub-edge $e$. Let $\Gamma'$ be a $1$-bend UPSE of $G \setminus \{e'\}$ on an UOSC point set $S$ whose induced 2UTBE is $\gamma'$. Then it is possible to construct a 1-bend UPSE of $G$ on $S$ that has $\Gamma'$ as a sub-drawing.
\end{restatable}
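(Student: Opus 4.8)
The plan is to extend the given drawing $\Gamma'$ of $G\setminus\{e'\}$ by inserting the single missing edge $e'$, routing it so that its unique bend lies in a small edge-free region just outside $CH(S)$ near the point $p_w$ representing the vertex $w$ covered by the sub-edge $e$. First I would record the structure of $e'$ in $\gamma$. Since $\gamma$ is single-top, $e'$ has exactly one top sub-edge, $e=(a,b)$ with $a\prec b$, and (as consecutive sub-edges of $e'$ alternate pages) at most one bottom sub-edge on each side of $e$; write $e_\ell=(s,a)$ and $e_r=(b,t)$ for these when present, so that $e'$ is one of $e$, $e_\ell\cdot e$, $e\cdot e_r$, $e_\ell\cdot e\cdot e_r$. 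Using that $e$ covers exactly one vertex $w$, has no top sub-edge nested inside it, and that $\gamma$ is crossing-free, I would argue that no spine crossing lies strictly between $a$ and $b$: such a crossing would carry a top sub-edge that is either nested inside $e$ or crosses $e$, a contradiction. Hence $a$ is either $s$ (when $e_\ell$ is absent) or the spine crossing shared by $e_\ell$ and $e$; in the latter case $e_\ell$ covers exactly the vertices strictly between $s$ and $w$, all of which lie on the convex-hull arc between $p_s$ and $p_w$ (the \emph{cap}); symmetrically for $b$, $e_r$, $t$.

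Next I would isolate the region at $p_w$. The two hypotheses on $e$ imply that in $\gamma$ no top sub-edge is incident to $w$ (one incident from either side is nested inside $e$ or crosses $e$), so in $\Gamma'$ every edge at $p_w$ enters $CH(S)$ near $p_w$. Moreover the top sub-edges of $\gamma'$ that wrap around $p_w$ in $\Gamma'$ are exactly those whose span contains $[a,b]$, and they form a nested family; the innermost of them bounds, together with the hull boundary near $p_w$, an open region $R$ just outside $CH(S)$ at $p_w$ that is free of edges of $\Gamma'$. I would also note two confinement facts: every edge of $\Gamma'$ incident to a cap vertex keeps its other endpoint inside the cap or in $\{s,w\}$ (otherwise it crosses $e_\ell$ in $\gamma$), with the symmetric statement on the $t$ side; and, when $e_\ell$ is present, the hull edge of $CH(S)$ through which $e'$ leaves $CH(S)$ near $p_w$ carries no edge of $\Gamma'$ (any edge there would cross $e_\ell$ or $e$ in $\gamma$).

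Then I would construct $\Gamma$: choose a bend point $q\in R$ with $y(p_s)<y(q)<y(p_t)$ (possible since $y(q)$ may be taken arbitrarily close to $y(p_w)$, which lies strictly between $y(p_s)$ and $y(p_t)$) and draw $e'$ as $\overline{p_sq}\cup\overline{qp_t}$. This polyline has one bend and is $y$-monotone, and its two segments meet only at $q$, since $q$ is outside $CH(S)$ while $\overline{p_sp_t}\subseteq CH(S)$. The portions of the segments outside $CH(S)$ stay inside $R$ (for $q$ close enough to $p_w$) and hence meet no edge; the portion of $\overline{p_sq}$ inside $CH(S)$ realizes $e_\ell$, and by the confinement fact it can meet only cap edges, which in $\gamma$ are pairwise non-crossing with $e_\ell$ (hence nested inside it or disjoint from it); symmetrically for $\overline{qp_t}$ and $e_r$. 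One then checks that the rotations at $p_s$ and at $p_t$ are respected. The argument splits into the at most four cases for which of $e_\ell,e_r$ are present; these are symmetric, and when a bottom sub-edge is absent the corresponding endpoint of $e'$ is the hull-neighbour of $p_w$ and the routing is straightforward.

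The point that needs real care — the crux — is that the two pieces of $e'$ are \emph{straight segments}, not arbitrary curves, so one cannot simply route them through the channels that $\gamma=\gamma'\cup\{e'\}$ guarantees combinatorially; one must prove that a single admissible position for $q$ exists. I would do this by a limiting argument: as $q$ tends to $p_w$ within $R$, the segment $\overline{p_sq}$ converges to the chord $\overline{p_sp_w}$, which by the cap confinement is met inside $CH(S)$ only by cap edges, and these lie on the hull-arc side of the chord because in $\gamma$ they are nested inside $e_\ell$ or inside one another; hence for $q$ sufficiently close to $p_w$ the open segment $\overline{p_sq}$ is disjoint from $\Gamma'$, and an arbitrarily small perturbation of $q$ within $R$ then makes $\overline{p_sq}$ and $\overline{qp_t}$ leave $p_s$ and $p_t$ in the correct rotational position. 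Combining the pieces, $\Gamma=\Gamma'\cup\{\overline{p_sq},\overline{qp_t}\}$ is a $1$-bend UPSE of $G$ on $S$ having $\Gamma'$ as a sub-drawing.
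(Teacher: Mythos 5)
Your construction is essentially the paper's own proof: the paper likewise places the bend point slightly outside $CH(S)$ above the point hosting the single covered vertex and rests on the two observations that (i) no edge of $\Gamma'$ crosses the chords from that point to the two end-points of $e'$ (your cap-confinement argument) and (ii) since no top sub-edge is nested inside $e$, a small region just outside the hull there is free of $\Gamma'$; your limiting argument for the choice of $q$ is a more detailed rendering of the paper's ``bend point slightly above $v_k$''. One auxiliary claim of yours is overstated: the hull edge through which $e'$ exits near $p_w$ \emph{can} carry a crossing of $\Gamma'$, namely by an edge whose bottom sub-edge is nested inside $e_\ell$ and whose top sub-edge nests $e$ (its spine crossing may lie between $w^-$ and $a$). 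This does not break the construction, because any such crossing point lies at positive distance from $p_w$, so your own ``$q$ sufficiently close to $p_w$'' argument already avoids it (take the exit point of $\overline{p_sq}$ above all such crossings; the resulting inside portion then has both of its endpoints, like those of the offending chord, on the closed cap arc, so the two segments cannot interleave).
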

% \begin{proof}
% 	Let $v_i$ and $v_j$ be the two end-vertices of the edge $e'$ and suppose that $i < j$, i.e., that $v_i \prec v_j$. Let $v_k$ be the single vertex covered by the top sub-edge $e$ of $e'$; clearly, $i<k<j$, i.e., $v_i \prec v_k \prec v_j$. There is no edge that crosses the segments $\overline{v_iv_k}$ and $\overline{v_kv_j}$ (although one of these segments can coincide with an edge) as otherwise there would be an edge crossing $e'$ in $\gamma$. Since no top sub-edge is nested inside $e$, we can draw the edge $e'$ by choosing a bend point slightly above $v_k$ and connecting it with $v_i$ and $v_j$ (see \Cref{fi:construction-2}).  
% \end{proof}

% \begin{figure}[t]
%     \centering
%     \subfigure[]{\label{fi:construction-e}\includegraphics[width=0.48\textwidth, page=5]{figures/construction.pdf}}
%     \subfigure[]{\label{fi:construction-f}\includegraphics[width=0.48\textwidth, page=7]{figures/construction.pdf}}
%     \caption{\label{fi:construction-2}
%      (a) A $1$-bend UPSE of the graph of \Cref{fi:construction-a} with one edge removed (the two end-vertices of the removed edge are highlighted); the top sub-edge of the removed edge cover only one vertex; (b) the addition of the missing edge (close-up). }
% \end{figure}
\begin{proof}
	Let $v_i$ and $v_j$ be the two end-vertices of the edge $e'$ and suppose that $i < j$, i.e., that $v_i \prec v_j$. Let $v_k$ be the single vertex covered by the top sub-edge $e$ of $e'$; clearly, $i<k<j$, i.e., $v_i \prec v_k \prec v_j$. There is no edge that crosses the segments $\overline{v_iv_k}$ and $\overline{v_kv_j}$ (although one of these segments can coincide with an edge) as otherwise there would be an edge crossing $e'$ in $\gamma$. Since no top sub-edge is nested inside $e$, we can draw the edge $e'$ by choosing a bend point slightly above $v_k$ and connecting it with $v_i$ and $v_j$ (see \Cref{fi:construction-2}).  
\end{proof}

\begin{figure}[t]
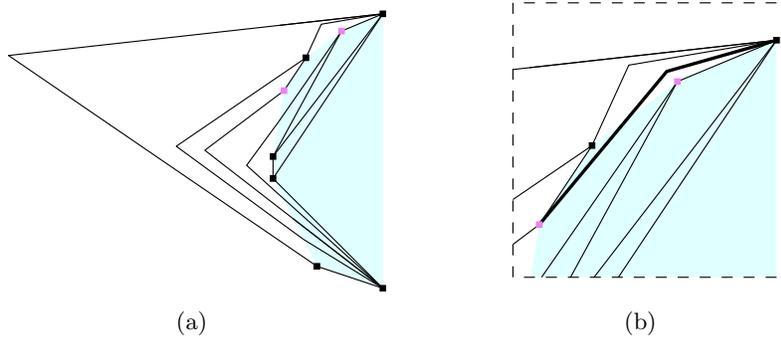

	\centering
	\subfigure[]{\label{fi:construction-e}\includegraphics[width=0.48\textwidth, page=5]{figures/construction.pdf}}
	\subfigure[]{\label{fi:construction-f}\includegraphics[width=0.48\textwidth, page=7]{figures/construction.pdf}}
	\caption{\label{fi:construction-2}
		(a) A $1$-bend UPSE of the graph of \Cref{fi:construction-a} with one edge removed (the two end-vertices of the removed edge are highlighted); the top sub-edge of the removed edge covers only one vertex; (b) the addition of the missing edge (close-up). }
\end{figure}

\begin{restatable}[{$\star$}]{lemma}{UPSEsufficiency}\label{le:UPSE-sufficiency}
Let $G=(V,E)$ be an $n$-vertex upward planar graph. If $G$ admits a nice single-top 2UTBE, then $G$ admits a $1$-bend UPSE on every UOSC point set $S$ of size $n$.
\end{restatable}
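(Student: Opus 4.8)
\emph{Overview and reduction.} The plan is to build the drawing from $\gamma$ in three stages: a reduction that strips off the ``easy'' top sub-edges, a geometric core that realises what remains, and a reinsertion step. For the reduction, as long as $\gamma$ contains a top sub-edge that covers at most one vertex and has no top sub-edge nested inside it, delete the edge $e'$ carrying it. The result is again single-top and nice, because deleting edges can neither create a top sub-edge nor a forbidden configuration. A short argument on the laminar family formed by the top sub-edges (a top sub-edge covering zero vertices has nothing nested inside it, hence gets removed; a top sub-edge covering one vertex whose only nested top sub-edge covers zero vertices is exposed once the latter is removed) shows that this process terminates in a nice single-top 2UTBE $\gamma_1$ in which every top sub-edge covers at least two vertices --- which is exactly condition~(c) in the definition of a forbidden configuration, so nothing relevant was hidden. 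It thus suffices to produce a $1$-bend UPSE of $\gamma_1$ whose induced 2UTBE is $\gamma_1$; the edges deleted above are then reinserted, in reverse order of deletion, one at a time via \Cref{le:cover-1} (with a trivial variant for the a~priori possible case of a top sub-edge covering zero vertices).

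\emph{Geometric core.} Let $\gamma'_{1,top}$ be the top-reduction of $\gamma_1$, and consider an edge $e'=v_i\ldots v_j$ carrying a top sub-edge, with dummy endpoints $d$ (the lower) and $d'$ (the upper), so that the bottom sub-edges of $e'$ are the chords $v_id$ and $d'v_j$ (one of these chords may be absent). If the top sub-edge of $e'$ is drawn as \Cref{le:nice-assignment} does it --- as two segments $s_d$ and $s_{d'}$ meeting at a single bend outside the hull --- then $e'$ becomes a $1$-bend polyline precisely when $v_id$ is collinear with $s_d$ and $d'v_j$ is collinear with $s_{d'}$; equivalently, when $d$ is placed at the second crossing of $CH(S)$ with the ray leaving $v_i$ at the slope assigned to $d$, and symmetrically for $d'$. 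The task therefore becomes: choose simultaneously a slope assignment for $\gamma'_{1,top}$ and the positions of the dummy points of an enrichment $S'$ of $S$ so that (i) the assignment is good; (ii) each slope obeys the required smallness constraint against the hull edges it must stay flatter than; and (iii) placing every dummy at the corresponding second hull crossing puts the dummy points on $CH(S)$ in the spine order prescribed by $\gamma_1$, so that $S'$ is genuinely an enrichment consistent with $\gamma_1$.

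\emph{Assembly.} Given such a choice, \Cref{le:nice-assignment} draws $\gamma'_{1,top}$ outside $CH(S')$ without crossings and with the prescribed slopes at the dummies; drawing every bottom (sub-)edge as the straight chord between its two endpoints of $S'$ realises the bottom page inside $CH(S')$ without crossings, since the bottom page of $\gamma_1$ is a crossing-free $1$-page book embedding and the corresponding chords of the convex polygon $CH(S')$ hence do not pairwise cross. By~(iii) no dummy becomes a bend of its edge, so every edge of $\gamma_1$ has at most one bend; moreover the part of the drawing inside $CH(S')$ and the part outside meet only at the dummy points, along $CH(S')$, and so do not cross. Forgetting the distinction between real and dummy points yields a $1$-bend UPSE of $\gamma_1$ on $S$ whose induced 2UTBE is $\gamma_1$, and the reinsertion step completes it to a $1$-bend UPSE of $G$ on $S$.

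\emph{The crux and the main obstacle.} The heart of the proof, and the only place niceness is used, is that requirements (i)--(iii) can always be met, for an arbitrary UOSC point set $S$. For a pair of nested top sub-edges, (i) forces an inequality between the magnitudes of their assigned slopes, whereas the constraint that a dummy land at its prescribed spine height confines its slope to a window whose location depends on how far, in the spine order, the dummy sits from the real endpoint with which it must be aligned. Running through the ways in which two edges of $\gamma_1$ can interleave along the spine, one checks that these two families of constraints are jointly satisfiable unless two edges of $\gamma_1$ form one of the four forbidden configurations of \Cref{fi:forbidden.2} --- the four types corresponding to whether a nested pair of bottom sub-edges precedes or follows the top sub-edges and whether the two top sub-edges share a vertex; indeed, \Cref{le:forbidden} is essentially the geometric reason why each such configuration obstructs a realisation on a suitable point set. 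Since $\gamma_1$ is nice, none of the four configurations occurs, so a good slope assignment and a consistent enrichment realising all the alignments exist. I expect this case analysis --- together with making precise the underlying geometric fact that, on a one-sided convex point set, lowering the slope at a dummy slides its forced position along $CH(S)$ monotonically towards its aligned endpoint --- to be the bulk of the work.
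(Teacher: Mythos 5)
Your overall architecture matches the paper's: strip off edges whose top sub-edge covers at most one vertex and nests nothing inside, realise the bottom page as chords inside the hull and the top page outside via \Cref{le:nice-assignment}, and reinsert the removed edges via \Cref{le:cover-1}. But the step you yourself label ``the crux'' --- that requirements (i)--(iii) are jointly satisfiable whenever $\gamma_1$ is nice --- is exactly the content of the lemma, and you assert it rather than prove it. The appeal to \Cref{le:forbidden} cannot carry this weight: that lemma only shows that a forbidden configuration mapped to an impossible point set \emph{obstructs} a realisation; it says nothing about why the \emph{absence} of forbidden configurations enables one on an \emph{arbitrary} UOSC point set. ``Running through the ways in which two edges can interleave, one checks\ldots'' is a promissory note for precisely the case analysis you concede is the bulk of the work.

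The gap is compounded by your choice of parametrisation, which makes the missing argument harder than it needs to be. You treat the slopes as free variables and derive the dummy positions as second hull crossings of the corresponding rays, which saddles you with the order-consistency condition (iii) --- a global constraint coupling all the slopes at once. The paper inverts this: fix \emph{any} enrichment $S'$ consistent with $\gamma_1$ (so (iii) holds by construction), draw every bottom (sub-)edge as a straight chord of $CH(S')$, and then \emph{define} the slope assigned to each dummy vertex to be the slope of its incident bottom chord; collinearity at the dummy, hence the one-bend property, is then automatic. With this definition, goodness reduces to comparing the slopes of the two bottom chords $\overline{v_{k_1}v_{i_1}}$ and $\overline{v_{k_2}v_{i_2}}$ feeding a nested pair of top sub-edges: since conditions (a) and (c) of the forbidden-configuration definition hold for that pair (this is what the preliminary removal guarantees), niceness forces condition (b) to fail, so either $v_{k_1}=v_{k_2}$ or the two bottom sub-edges do not nest, and in either case one-sided convexity gives $|\sigma_1|<|\sigma_2|$ directly. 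That short two-case argument is the proof you are missing; to salvage your own parametrisation you would additionally have to establish (iii), which the paper's route avoids entirely.
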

\begin{proof}
	If $G$ admits a nice single-top 2UTBE $\gamma$, then we can compute a 1-bend UPSE on every one-sided convex point set $S$ as follows. Let $S'=\langle p_1,p_2,\dots,p_{n'} \rangle$ be an enrichment of $S$ consistent with $\gamma$.
	We remove all edges that have a top sub-edge covering only one vertex and nest no edges  inside. If after such a removal, some new edges with the same properties are created they are recursively removed, until no more edge exists with the same properties. The reason we remove these edges is to guarantee that, in the description that follows, we can assume that if a pair of edges satisfies Condition (a) of the definition of forbidden configuration, they also satisfy Condition (c), and therefore they do not satisfy Condition (b). The removed edges will be reinserted at the end in reverse order of removal using \Cref{le:cover-1}. 
	
	Let $\gamma'$ be the single-top 2UTBE resulting from the edge removal explained above and let $G'$ be the corresponding graph. We now compute a $1$-bend UPSE of $G'$ on $S'$. We first map each vertex $v_i$ to the point $p_i$ ($i=1,2,\dots,n'$). Notice that by the choice of the additional points, the dummy vertices are mapped to the dummy points. We then draw the (sub-)edges that are in the bottom page as a straight-line segments inside the convex hull $CH(S')$ of $S'$. Since the bottom-to-top order of the vertices along $CH(S')$ is the same as in $\gamma'$, the (sub-)edges drawn inside $CH(S')$ do not cross each other. 
	
	Now, in order to draw the top (sub-)edges, we consider the top restriction of $\gamma'$, and define a slope assignment, assigning to each dummy vertex $d$ the slope of the segment incident to $d$ that is in the bottom page (drawing the segment incident to $d$ with this slope guarantee that no additional bend is created at $d$). We now prove that this slope assignment is good and thus by \Cref{le:nice-assignment} all the top (sub-)edges can be drawn outside the convex hull respecting the slope assignment, which guarantees that each edge is drawn with one bend. 
	
	Let $e_1=(v_{i_1},v_{j_1})$, with $i_1 < j_1$, and $e_2=(v_{i_2},v_{j_2})$ with $i_2 < j_2$ be two top sub-edges such that $e_2$ is nested inside $e_1$, i.e., such that $v_{i_1} \preceq v_{i_2} \prec v_{j_2} \preceq v_{j_1}$. Each sub-edge has one or two assigned slopes depending on the number of dummy vertices. Consider any two slopes, $\sigma_1$ assigned to an end-vertex of $e_1$ and $\sigma_2$ assigned to an end-vertex of $e_2$. In order to prove that the described slope assignment is good we have to prove that either $\sigma_1$ and $\sigma_2$ are in different quadrants, or they are in the same quadrant and $|\sigma_1|<|\sigma_2|$. Thus, in the following, assume that $\sigma_1$ and $\sigma_2$ are in the same quadrant.

	If $\sigma_1$ is assigned to $v_{i_1}$ and $\sigma_2$ is assigned to $v_{j_2}$ or $\sigma_1$ is assigned to $v_{j_1}$ and $\sigma_2$ is assigned to $v_{i_2}$, then they are in different quadrants. So, assume that $\sigma_1$ is assigned to $v_{i_1}$ and $\sigma_2$ is assigned to $v_{i_2}$ (the case when $\sigma_1$ is assigned to $v_{j_1}$ and $\sigma_2$ is assigned to $v_{j_2}$ is symmetric). Consider the bottom sub-edge $e'_1=(v_{k_1},v_{i_1})$ that shares $v_{i_1}$ with $e_1$ and the bottom sub-edge $e'_2=(v_{k_2},v_{i_2})$ that shares $v_{i_2}$ with $e_2$. By Condition (b) of the definition of forbidden configuration, either $v_{k_1}=v_{k_2}$ or the bottom sub-edges are not nested, i.e., the order of the vertices is $v_{k_1} \prec v_{i_1} \prec v_{k_2} \prec  v_{i_2} \prec v_{j_2} \preceq v_{j_1}$. In the first case, the two segments that represent $e'_1$ and $e'_2$ (and that define the slope assigned to $e_1$ and to $e_2$) share an end-vertex and since $v_{i_2}$ follows $v_{i_1}$ the slope of the segment $\overline{v_{k_1}v_{i_1}}$ is smaller, in absolute value than the slope of $\overline{v_{k_1},v_{i_2}}$. In the second case, the segment $\overline{v_{k_2},v_{i_2}}$ has both end-vertices after the end-vertices of the segment $\overline{v_{k_1},v_{i_1}}$; thus from the one-sided convexity of $S$ it follows that the slope of $\overline{v_{k_2},v_{i_2}}$ is larger, in absolute value, than the slope of $\overline{v_{k_1},v_{i_1}}$. This concludes the proof that the slope assignment is good and therefore, by \Cref{le:nice-assignment}, that it is possible to compute a 1-bend PSE of $G'$ on $S'$. 
	
	It remains to re-insert all the edges removed at the beginning of the algorithm, i.e., those that had a sub-edge covering a single vertex and did not have an edge nested inside them. These edges will be re-inserted in the reverse order of removal. As a consequence, each time we add one such edge $e$ we are in the hypothesis of \Cref{le:cover-1} and  it is possible to re-insert $e$. Once all the edges of $G \setminus G'$, are reinserted and the dummy vertices and points are removed, the resulting drawing is a $1$-bend UPSE of $G$ on $S$.
\end{proof}

\section{1-bend UPSE of $st$-outerplanar graphs}\label{se:outerplanar}

\newcommand{\mi}{\textrm{mid}}
A graph is \emph{outerplanar} if it admits an \emph{outerplanar drawing}, i.e., a planar drawing in which all vertices belong to the boundary of the outer face, which defines an \emph{outerplanar embedding}. Unless otherwise specified, we will assume our graphs to have planar or outerplanar embeddings. 
An edge of an embedded planar graph $G$ is \emph{outer} if it belongs to the outer face, and it is \emph{inner} otherwise. The \emph{weak dual} $\overline{G}$ of~$G$ is the graph having a node for each inner face of $G$, and an edge between two nodes if and only if the two corresponding faces share an edge. If $G$ is outerplanar, its weak dual $\overline{G}$ is a tree. If $\overline{G}$ is a path, $G$ is an \emph{outerpath}. A \emph{fan} is an internally-triangulated outerpath whose inner edges all share an end-vertex. 
%
%For a directed graph $G$, we assume each edge $e=(u,v)$  to be oriented from $u$ to $v$, and hence denote $u$ and $v$ as the \emph{tail} and \emph{head} of $e$, respectively. 
%A vertex $u$ of $G$ is a \emph{source} (resp.\ a \emph{sink}) if it is the tail (resp.\ the head) of all its incident edges. If $u$ is neither a source nor a sink of $G$, then it is  \emph{internal}.

An \emph{$st$-digraph} is a directed acyclic graph with a single source $s$ and a single sink $t$; an \emph{$st$-outerplanar graph} (resp.\ \emph{$st$-outerpath}) is an $st$-digraph whose underlying undirected graph is an outerplanar graph (resp.\ an outerpath). An \emph{$st$-fan} is an $st$-digraph whose underlying graph is a fan and whose inner edges have $s$ as an end-vertex. An $st$-outerplanar graph such that the edge $(s,t)$ exists is \emph{one-sided} if  $(s,t)$ is an outer edge, it is \emph{two-sided} if $(s,t)$ is an inner edge.

We recall a decomposition of $st$-outerpaths defined in~\cite{DBLP:journals/ejc/BhoreLMN23}.
The \emph{extreme faces} of an $st$-outerpath $G$ are the two faces that correspond to the two degree-one nodes of the weak dual $\overline{G}$. An $st$-outerpath $G$ is \emph{primary} if and only if one of its extreme faces is incident to $s$ and the other one to $t$. Observe that this definition is stronger than the one used in~\cite{DBLP:journals/ejc/BhoreLMN23}, in the sense that a primary $st$-outerpath according to our definition is a primary $st$-outerpath also according to the definition in~\cite{DBLP:journals/ejc/BhoreLMN23} (but the converse may not be true). 
Let $G$ be a primary $st$-outerpath. Consider a subgraph $F$ of $G$ that is an $xy$-fan (for some vertices $x,y$ of $G$). Let $\langle f_1,\dots,f_h \rangle$ be the list of faces forming the path $\overline{G}$ ordered from $s$ towards $t$. Note that the subgraph $F$ of $G$ is formed by a subset of  faces that are consecutive in the path $\langle f_1,\dots,f_h \rangle$. Let $f_i$ be the face of $F$ with the highest index, with $1 \le i \le h$. We say that $F$ is \emph{incrementally maximal} if $i=h$ or $F \cup f_{i+1}$ is not an $xy$-fan. For every face $f_i$ we denote by $\mi(f_i)$ the unique vertex of $f_i$ with one incoming edge and one outgoing edge in the boundary~of~$f_i$.

\begin{definition}
An \emph{$st$-fan decomposition} of a primary $st$-outerpath $G$ is a sequence of $s_i t_i$-fans $F_i \subseteq G$, with $i=1,\dots,k$, such that: (i) $F_i$ is incrementally maximal for each $i = 1, \dots, k$;  (ii) for any $1 \le i < j \le k$, $F_i$ and $F_j$ do not share any edge if $j>i+1$, while $F_i$ and $F_{i+1}$ share a single edge, which we denote by $e_i$; (iii) $s_1=s$; (iv) the tail of $e_i$ is $s_{i+1}$ for each $i=1, \dots, k-1$; (v) $e_i \neq (s_i,t_i)$ for each $i=1, \dots, k-1$;  and (vi) $\bigcup_{i=1}^k F_i=G$. Refer to Fig.~\ref{fi:fan_decomposition}.a-b.
\end{definition}

\begin{lemma}[\cite{DBLP:journals/ejc/BhoreLMN23}]
Every primary $st$-outerpath $G$ admits an $st$-fan decomposition.
\end{lemma}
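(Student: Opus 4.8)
The plan is to build the decomposition greedily by a single sweep of the weak dual from $s$ toward $t$, peeling off one incrementally maximal fan at a time, and to prove correctness by induction on the number $h$ of inner faces. We may assume $G$ is internally triangulated: this is implicit in the statement, since a union of (triangulated) fans that equals $G$ forces every inner face of $G$ to be a triangle. As $G$ is primary, I order its inner faces $f_1,\dots,f_h$ along the dual path so that $f_1$ is incident to $s$ and $f_h$ to $t$, and I write $g_\ell$ for the inner edge shared by $f_\ell$ and $f_{\ell+1}$. The guiding \emph{invariant} is that the apex $s_i$ of the fan currently being built is the \emph{unique source} of the sub-outerpath $G^{(i)}$ formed by the still-uncovered faces $f_{a_i},\dots,f_h$ (where $f_{a_i}$ is the first uncovered face); for $i=1$ this holds trivially because $s_1=s$ is the global source.

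\textbf{Building $F_i$ and the easy properties.} Starting from $f_{a_i}$ with apex $s_i$, I take $F_i$ to be the longest run of consecutive faces $f_{a_i},\dots,f_b$ that all contain $s_i$ and form an $s_it_i$-fan; this is exactly incremental maximality, giving Property~(i). Because the dual is a path and each $F_i$ is a contiguous block of faces, an inner edge lies in two different blocks only if it is the edge separating two consecutive blocks, so $F_i$ and $F_{i+1}$ share exactly the single edge $e_i:=g_b$ and non-consecutive blocks share no edge (Property~(ii)); the blocks cover all faces (Property~(vi)). Property~(iii) is the initialization, and Property~(iv) is enforced by \emph{defining} $s_{i+1}$ to be the tail of $e_i$. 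For Property~(v), note that since $F_i$ is incrementally maximal and $f_{b+1}\notin F_i$, the apex $s_i$ cannot lie on $f_{b+1}$: otherwise $F_i\cup f_{b+1}$ would again be a fan with apex $s_i$, and, because the invariant gives $s_i$ no in-edge in $G^{(i)}$, it would still be an $s_it$-fan, contradicting maximality. Hence $e_i=g_b\subseteq f_{b+1}$ is a \emph{rim} edge of $F_i$ (not incident to $s_i$), so in particular $e_i\neq(s_i,t_i)$.

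\textbf{The crux: preserving the invariant.} The main obstacle is to show that the tail $p$ of $e_i$ is the unique source of $G^{(i+1)}$ (the faces $f_{b+1},\dots,f_h$), so that $p$ is a legitimate apex $s_{i+1}$ and the induction hypothesis applies. Since $e_i$ is a rim edge of the extreme fan face $f_b=\{s_i,u,w\}$ with $s_i$ the source of this triangle, a one-line case check identifies $p$ with $\mi(f_b)$ (the unique vertex of $f_b$ with one incoming and one outgoing edge), while the head $q$ of $e_i$ is the triangle's sink. To locate the source of $G^{(i+1)}$ I would use the standard fact that the outer boundary of the $2$-connected $st$-outerpath $G^{(i)}$ splits at $s_i$ and $t$ into two directed paths from $s_i$ to $t$, and then track how deleting the faces of $F_i$ re-routes this boundary: the edge $e_i$ becomes a new boundary edge of $G^{(i+1)}$, the head $q$ keeps the in-edge $p\to q$ and so is not a source, and the heart of the argument is to prove that \emph{all} in-edges of $p$ in $G^{(i)}$ lie inside $F_i$ (they are spokes/rim edges of the fan), so that $p$ becomes a source of $G^{(i+1)}$ while acyclicity together with $s_i$ being the only source of $G^{(i)}$ forces every other vertex of $G^{(i+1)}$ to retain an in-edge. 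This pins $p$ as the unique source; and since $p\in e_i\subseteq f_{b+1}$ with $f_{b+1}$ an extreme face of $G^{(i+1)}$, the sub-outerpath $G^{(i+1)}$ is again primary with source $p=s_{i+1}$.

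\textbf{Conclusion and expected difficulty.} Prepending $F_i$ to the decomposition supplied by the induction hypothesis for $G^{(i+1)}$ then yields the claimed $st$-fan decomposition, all six properties gluing correctly across the junction edge $e_i$ by the paragraphs above. I expect the genuinely delicate step to be the source-tracking of the crux: verifying that $p=\mi(f_b)$ sheds exactly its fan-internal in-edges while no other vertex of $G^{(i+1)}$ becomes a source requires a careful analysis of how the two directed boundary paths of $G^{(i)}$ are re-routed through the newly exposed edge $e_i$, and is precisely where acyclicity and the single-source/single-sink structure of $st$-digraphs must be used in full. By contrast, Properties~(i)--(iii) and (vi) follow mechanically from the contiguity of fans along the dual path, and (iv)--(v) from incremental maximality.
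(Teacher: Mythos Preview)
The paper does not prove this lemma at all: it is quoted verbatim from \cite{DBLP:journals/ejc/BhoreLMN23} and used as a black box, so there is no ``paper's own proof'' against which to compare your attempt. Your greedy sweep along the dual path, peeling off maximal fans and maintaining the invariant that the next apex is the unique source of the residual outerpath, is the natural construction and is in the spirit of how such decompositions are obtained; nothing in the present paper contradicts it.

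One small caution on your opening assumption: you assert that internal triangulation is ``implicit in the statement'' because a union of fans forces triangular inner faces. That is a valid observation about the \emph{conclusion}, but it means the lemma as stated here can only be true for internally-triangulated primary $st$-outerpaths; you should phrase this as a hypothesis (or note that one first triangulates by adding chords, which does not affect the properties used downstream) rather than as something you may silently assume. Beyond that, the step you flag as delicate---showing that the tail $p$ of $e_i$ is the unique source of $G^{(i+1)}$---is indeed the place where care is needed: your claim that every in-edge of $p$ in $G^{(i)}$ lies inside $F_i$ is not automatic, since $p$ could a priori sit on several faces of $G^{(i+1)}$ and receive an in-edge there. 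The clean way to close this is to argue via outerplanarity: $p$ lies on the outer boundary of $G^{(i)}$, the edge $s_i\to p$ is a boundary edge (it separates $F_i$ from the outer face on one side), and the two boundary $s_i$--$t$ paths of $G^{(i)}$ then force $p$'s unique boundary in-edge to be exactly $s_i\to p$, which disappears in $G^{(i+1)}$. With that patched, your outline goes through.
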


%\begin{lemma}[\cite{DBLP:journals/ejc/BhoreLMN23}]
%Let $G$ be a primary $st$-outerpath and let $F_1,F_2,\dots,F_k$ be an $st$-fan decomposition of $G$. Any two fans $F_i$ and $F_{i+1}$, with $i=1,\dots,k-1$, are such that if $F_{i+1}$ is not one-sided, then $e_i=(s_{i+1},t_{i})$.
%\end{lemma}

Let $G$ be an $st$-outerplanar graph and let $\overline{P}$ be a path in the weak dual $\overline{G}$ of $G$ whose two endpoints are such that one corresponds to a face containing $s$ and the other one to a face containing $t$. Observe that the primal graph $G_{core}$ of $\overline{P}$ is a primary $st$-outerpath by construction, we call it the \emph{core} of $G$. On the other hand, if an outer edge $(u,v)$ of $G_{core}$ is not an outer edge of $G$, then it corresponds to a separation pair in $G$. In particular, $(u,v)$ belongs to $G_{core}$ and to another subgraph $A_{uv}$ of $G$ which is a one-sided $uv$-outerplanar graph; we call $A_{uv}$ an \emph{appendage} of $G$ \emph{attached} to $(u,v)$; refer to Fig.~\ref{fi:fan_decomposition}.c.

\begin{figure}[t]
\centering
\includegraphics[width=0.9\textwidth, page=1]{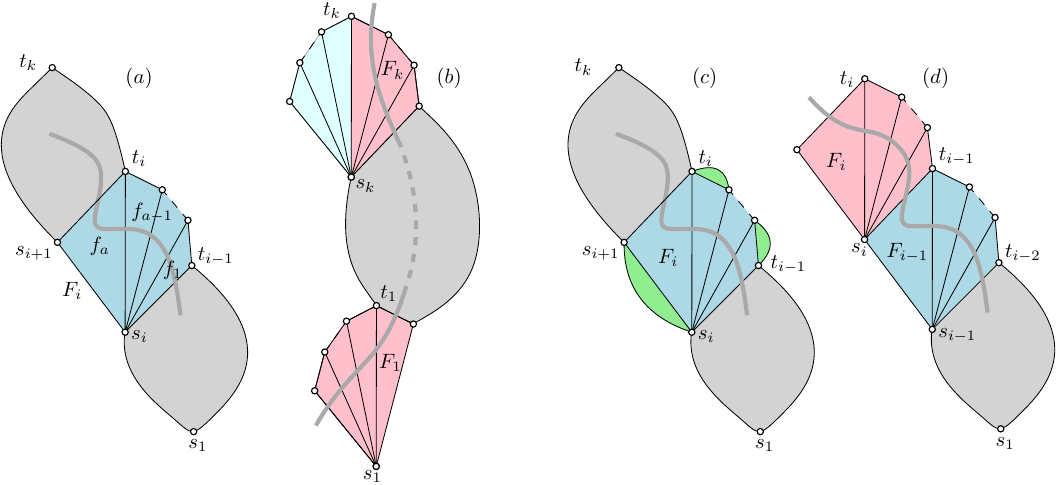}
\caption{(a) An $st$-fan in the middle of the $st$-fan decomposition.  (b) The first and last $st$-fans of the $st$-fan decomposition. Since the outerpath is primary, the last fan can always be chosen to be one-sided with edge $(s_k,t_k)$ regarded as a possible attachment edge of an appendage (light blue). (c) $G_{core}$ is blue and gray, while green subgraphs represent appendages of $G_{core}$. (d) Illustration of Property~\ref{lemma:properties_outerpl}.b. }
\label{fi:fan_decomposition}
\end{figure}

\begin{property2} \label{lemma:properties_outerpl}
Let $G$ be an $st$-outerplanar graph and let $G_{core}$ be the core of $G$. The following properties hold:
\begin{itemize}
%\item[(a)] Every primary $st$-outerpath admits an $st$-fan decomposition 

\item[(a)] Every outer edge of $G_{core}$ is potentially an attachment edge of an appendage. 
 
\item[(b)] Let $F_1,\dots,F_k$ be an $st$-fan decomposition of $G_{core}$ and let $P$ be its dual path. Let $s_i$, $t_i$ denote the source and the sink of $F_i$. The fans $F_{i-1}$ and $F_i$ share the edge $(s_{i},t_{i-1})$; See Fig.~\ref{fi:fan_decomposition}.d (Stronger version of Lemma 3 in~\cite{DBLP:journals/ejc/BhoreLMN23}).

\item[(c)] Path $P$ enters $F_i$, $i=2,\dots,k$ through the edge $(s_{i},t_{i-1})$ and leaves $F_i$, $i=1,\dots,k-1$ through the edge $(s_{i+1},t_i)$.
\item[(d)] Let $F_i$ be a two-sided $st$-outerpath and let $f_1,\dots,f_a$ be the faces of $F_i$ as visited by $P$. Faces $f_1,\dots,f_{a-1}$, $a\geq 2$, lie on one side of $(s_i,t_i)$ and only the face $f_a$ lies on the other side of $(s_i,t_i)$. Refer to Fig.~\ref{fi:fan_decomposition}.a. 
\end{itemize}
\end{property2}

%{added an explanation of the properties} 
Property~(a) holds by definition. If Properties~(b) and (c) do not hold, then $G_{core}$ has either more than one sink or more than one source. Finally, assuming that Property~(d) does not hold, implies that $G_{core}$ is not an outerpath.

In this section we utilize a tool, called \emph{Hamiltonian completion}, that is another way to look at 2UTBEs. 
An upward planar graph $G$ has a 2UBE if and only if it is \emph{subhamiltonian}, i.e., it is a spanning subgraph of an upward planar $st$-digraph $\tilde{G}$ that has a directed Hamiltonian $st$-path~\cite{MchedlidzeS09}. 
More generally, there is an analogy between upward topological book embeddings and a more general form of  subhamiltonicity. Let $G$ be an upward planar graph and  $\tilde{G}=(V,\tilde{E})$ be an embedded $st$-digraph such that: (1) $G=(V,E)$ is a spanning subgraph of $\tilde{G}$, (2) $\tilde{G}$  has a directed Hamiltonian $st$-path $H$, and (3) each edge in $E$ is crossed by at most one edge of $\tilde{E}\setminus E$.  We say that $H$ is a \emph{subhamiltonian path} of $G$ and $\tilde{G}$ is an \emph{HP-completion} of $G$. See Fig.~\ref{fi:fan_proof} for an example of subhamiltonian paths.

\begin{lemma}[\cite{MchedlidzeS09}]\label{le:hp-completion}
An upward planar graph has a 2UTBE with at most one spine-crossing per edge if and only if it has an HP-completion.
The order of the vertices along the spine in the 2UTBE is the same as in the subhamiltonian~path.
\end{lemma}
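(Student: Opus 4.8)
The plan is to show that the two objects encode the same planar picture: a 2UBE of a once-subdivided copy of $G$ on one side, and a planar $st$-digraph consisting of $G$ together with a ``spine path'' on the other. The key correspondence is that a subdivision point of an edge $e$ in the 2UTBE matches up with the unique crossing of $e$ by an added edge in the HP-completion; once this is set up, the order of the real vertices along the spine and along the subhamiltonian path agree by construction, which yields the last sentence of the statement for free.

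For the direction from a 2UTBE to an HP-completion, I would start from a 2UBE $\beta$ of a subdivision $G'$ of $G$ in which each edge is subdivided at most once (this is exactly a 2UTBE of $G$ with at most one spine crossing per edge). Write $v_1 \prec \dots \prec v_{n'}$ for the spine order of $G'$; the real vertices $u_1,\dots,u_n$ of $G$ form a subsequence, and since $\prec$ is a topological order of $G'$, its restriction is a topological order of $G$. In the book picture I re-merge, for every subdivided edge $e=(u,w)$, its top and bottom sub-edges into a single curve through the former subdivision point; by the assumption that no spine crossing has two incident sub-edges on the same page, this curve crosses the spine exactly once, while non-subdivided edges remain on a single page. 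Now let $\tilde G$ have vertex set $V$ and edge set $E$ together with one extra (``spine'') edge oriented from $u_i$ to $u_{i+1}$, for every consecutive pair $u_i,u_{i+1}$ of real vertices that is not already adjacent, each routed immediately next to the spine inside its gap. A spine edge crosses precisely the curves of the subdivided $G$-edges whose subdivision point lies in its gap; since each edge of $G$ is subdivided at most once, every edge of $E$ is crossed at most once by an edge of $\tilde E\setminus E$, and spine edges cross neither one another (disjoint gaps) nor a non-subdivided edge (which stays off the spine). Finally $u_1\to u_2\to\dots\to u_n$ is a directed Hamiltonian path, all added edges respect the topological order, and $u_1$ (resp.\ $u_n$) is the unique source (resp.\ sink), so $\tilde G$ is an $st$-digraph and an HP-completion of $G$ with the claimed vertex order.

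For the converse I would reverse this. Given an HP-completion $\tilde G$ with subhamiltonian path $H=(u_1,\dots,u_n)$, first discard every edge that lies neither in $E$ nor on $H$ (the result is still an HP-completion with the same $H$ and no more crossings), so that every crossing is between an edge of $E$ and an edge of $H\setminus E$. Planarizing the drawing by placing a dummy vertex at each crossing turns $\tilde G$ into a planar $st$-digraph $\widehat G$, and subdividing the edges of $H$ at these same points turns $H$ into a directed Hamiltonian $st$-path $\widehat H$ of $\widehat G$. By the known equivalence between subhamiltonicity and 2-page upward book embeddings, $\widehat G$ admits a 2UBE with $\widehat H$ as the spine order. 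Deleting from this 2UBE all (sub-)edges coming from $H\setminus E$ leaves each dummy vertex incident only to the two pieces of the single $E$-edge it subdivides; suppressing such dummies when both incident sub-edges end up on the same page, we obtain a 2UBE of a subdivision of $G$ in which each edge carries at most one dummy vertex, i.e., a 2UTBE of $G$ with at most one spine crossing per edge, and with the real vertices in the order of $H$.

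The main obstacle I expect is the bookkeeping in the converse direction: justifying the harmless simplification of $\tilde G$, making precise the appeal to the standard subhamiltonicity--2UBE construction so that it returns a 2UBE whose spine order restricts to that of $H$, and checking that after the deletion step no dummy vertex survives with both incident sub-edges on the same page (so that the resulting drawing genuinely has at most one spine crossing per edge). The first direction is comparatively routine once one verifies that merging the two sub-edges of an edge into one curve and routing the spine edges alongside the spine create no crossings beyond the single one allowed per edge of $E$.
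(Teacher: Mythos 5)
The paper does not prove this lemma: it is imported verbatim from the cited reference \cite{MchedlidzeS09}, so there is no in-paper proof to compare against. Your reconstruction is the standard argument and is essentially correct: spine crossings of the 2UTBE correspond bijectively to crossings between edges of $E$ and edges of the Hamiltonian augmentation, and both directions are just a translation between the book picture and the embedded $st$-digraph picture, with the vertex order preserved by construction. The converse direction's appeal to the classical ``Hamiltonian $st$-path $\Rightarrow$ 2UBE with spine order equal to the path order'' (left/right of the path goes to bottom/top page) is exactly the right tool, and your clean-up of the dummy vertices after deleting the $H$-edges is consistent with the paper's convention that a spine crossing never has both incident sub-edges on the same page.

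One spot in the forward direction deserves an explicit sentence. You claim that a spine edge crosses precisely the merged curves whose subdivision point lies in its gap, but you only insert spine edges in gaps whose endpoints $u_i,u_{i+1}$ are \emph{not} already adjacent in $G$. If an edge $e$ of $G$ has its spine crossing in a gap whose endpoints \emph{are} adjacent, the Hamiltonian path there uses an edge of $E$, and a crossing of $e$ with that edge would not be covered by condition (3) of the definition of an HP-completion (and would ruin planarity of the $E$-part). This situation is in fact harmless, but for a non-obvious reason: if $e$ crosses the spine strictly between consecutive adjacent real vertices $u_i\prec u_{i+1}$, then the sub-edge of $e$ lying on the same page as the edge $(u_i,u_{i+1})$ would cross it in the 2UBE unless that sub-edge shares the endpoint $u_i$ (or $u_{i+1}$) and nests inside it; planarity of the 2UBE therefore forces the shared-endpoint configuration, and the merged curve of $e$ can be routed without touching $(u_i,u_{i+1})$. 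Adding this observation closes the only real gap I see; the rest is bookkeeping of the kind you already flag.
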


The subhamiltonian path crosses some edges of $G$ by splitting them into \emph{sub-edges}. We inherit the definition of nesting (sub-)edges from 2UTBE to  HP-completion. Thus, the (sub-)edges $(u,v)$, $(w,z)$ \emph{nest} in $\tilde{G}$ if in the embedding of $\tilde{G}$ they are on the same side of the path $H$ and  $u \prec w \prec z \prec v$ or $w \prec u \prec v \prec z$. Since the order of the vertices on the spine of the book and along the Hamiltonian path coincide, two (sub-)edges  nest in $\tilde{G}$ if and only if they nest in the corresponding 2UTBE.  We now prove the key result of this section.

\begin{restatable}[{$\star$}]{lemma}{lestouterpath}\label{lemma:st-outerpath}
Every primary st-outerpath has an HP-completion without nesting sub-edges.      
\end{restatable}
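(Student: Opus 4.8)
The plan is to process the $st$-fan decomposition $F_1,\dots,F_k$ of the primary $st$-outerpath $G$ from left to right, maintaining as an invariant an HP-completion of the partial graph $G_i := \bigcup_{j\le i} F_j$ together with its subhamiltonian path $H_i$, such that no two sub-edges nest and, crucially, such that the shared edge $e_i=(s_{i+1},t_i)$ between $F_i$ and $F_{i+1}$ is left in a ``clean'' state: either $e_i$ lies on $H_i$, or it is crossed exactly once by $H_i$ with both its resulting sub-edges free to be nested into by the next fan without creating a nesting among already-placed sub-edges. Because every $F_i$ is an $st_i$-fan whose inner edges emanate from $s_i$ (Property~\ref{lemma:properties_outerpl} and the definition of $st$-fan decomposition), a single fan is easy to handle in isolation: the natural subhamiltonian path walks along the ``lower'' boundary path of the fan from $s_i$ to $t_i$, and each inner edge $(s_i,\mathrm{mid}(f))$ is either already on the path or is crossed at most once; since all inner edges share the endpoint $s_i$, they pairwise share a vertex and therefore do not nest (nesting requires the strict containment $u\prec w\prec z\prec v$, impossible when $u=w$). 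So within one fan there are no nestings by construction.

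The first step is to set up the base case: take $F_1$, which is an $s_1t_1 = st_1$-fan, and build the path along its boundary not through $t_1$, crossing each inner edge of $F_1$ at most once, so that $e_1=(s_2,t_1)$ is on the constructed path (this is possible because, by Property~\ref{lemma:properties_outerpl}(c), the dual path leaves $F_1$ through $e_1$, so $e_1$ is a boundary edge in the appropriate position). The second step is the inductive step: assuming $G_{i-1}$ has the desired HP-completion with $e_{i-1}$ clean, attach $F_i$. Here I would distinguish whether $F_i$ is one-sided or two-sided (Property~\ref{lemma:properties_outerpl}(d) tells us a two-sided fan has all but its last face on one side of $(s_i,t_i)$). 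In the one-sided case the fan hangs entirely on one side and can be ``threaded'' by extending the path from $t_{i-1}$ along the boundary of $F_i$; in the two-sided case the single face on the far side of $(s_i,t_i)$ is absorbed by routing the path appropriately and crossing $(s_i,t_i)$ once if needed. In all cases I must verify that the new sub-edges created inside $F_i$ do not nest with each other (they share $s_i$, as above) and do not nest with any previously created sub-edge — this is where the ``clean'' invariant on $e_{i-1}$ is used: previously created sub-edges all lie, in the cyclic/linear order, either entirely to the left of $s_i$ or are comparable to $e_{i-1}$'s sub-edges, and the fan $F_i$ occupies only vertices $\succeq s_i$, so a nesting spanning the boundary between $G_{i-1}$ and $F_i$ would have to involve $e_{i-1}$ itself, which the invariant rules out.

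The main obstacle I anticipate is precisely the bookkeeping of that cross-fan non-nesting condition: when the subhamiltonian path is forced to cross the shared edge $e_{i-1}$ (rather than route along it), the two sub-edges of $e_{i-1}$ become nesting-candidates for sub-edges created both in $F_{i-1}$ and in $F_i$, and one has to argue that the geometry of the fan decomposition — specifically that $e_{i-1}=(s_i,t_{i-1})$ with $s_i$ the tail (Property~\ref{lemma:properties_outerpl}(b), condition (iv) of the decomposition) and that $F_i$'s vertices all follow $s_i$ in the path order — prevents any strict containment. A secondary subtlety is the orientation/embedding consistency: the HP-completion must remain an upward planar embedded $st$-digraph with each original edge crossed at most once, so when I add the crossing edges of $\tilde G$ I must check they can be oriented consistently with a global $st$-numbering; here the fact that each $F_i$ is an $s_it_i$-fan with a canonical $st$-orientation, glued along directed edges $e_i$ whose endpoints are consistently the source/sink of adjacent fans, should make the global orientation automatic, but it needs to be stated. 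Once the induction reaches $F_k$, which by the remark in Fig.~\ref{fi:fan_decomposition}.b can be taken one-sided, we close off the path at $t=t_k$ and obtain the desired HP-completion of $G$ with no nesting sub-edges.
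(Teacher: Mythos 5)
Your overall strategy coincides with the paper's: induct over the $st$-fan decomposition $F_1,\dots,F_k$, extend a subhamiltonian path fan by fan along the boundary vertices, keep the shared edge $(s_i,t_{i-1})$ in a controlled state, and case-split on whether consecutive fans are one- or two-sided. However, the step where you dismiss cross-fan nestings has a genuine gap. You argue that previously created sub-edges lie entirely before $s_i$ in the path order, so that any nesting straddling the boundary between $G_{i-1}$ and $F_i$ would have to involve $e_{i-1}$, which your invariant keeps clean. This is false precisely in the critical case where $F_{i-1}$ and $F_i$ are both two-sided: there the path crosses $(s_{i-1},t_{i-1})$ at a point $p_{i-1}$ and $(s_i,t_i)$ at a point $p_i$, and the visit order is $s_{i-1},p_{i-1},s_i,t_{i-1},p_i,t_i$. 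The sub-edge $(p_{i-1},t_{i-1})$ ends at $t_{i-1}$, a vertex of $F_i$ lying strictly after $s_i$, so sub-edges from $G_{i-1}$ do reach into $F_i$'s territory; the dangerous pair is $(s_{i-1},t_{i-1})$ versus $(s_i,t_i)$ --- neither of which is the shared edge --- and their non-nesting must be verified directly from the interleaving of the six points above (this is exactly the computation in the paper's Case~1). Your invariant on $e_{i-1}$ does not rule this configuration out.

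Two further points need repair. First, your within-fan argument that inner edges cannot nest because they all share the apex $s_i$ is a statement about whole edges, not sub-edges: once an edge is crossed, its sub-edge not incident to $s_i$ no longer shares a vertex with the other inner edges and could in principle nest with them. The argument only goes through if you additionally pin down that at most one edge per fan is ever crossed, namely $(s_i,t_i)$ and only when $F_i$ is two-sided, and then check where its crossing point falls relative to the vertices $\mathrm{mid}(f_j)$; your plan gestures at this but leaves it open. Second, the induction as stated does not cover the degenerate case $t_{i-1}=t_i$ (the new fan is a single triangle whose sink coincides with the previous one): there the concatenation of the boundary walk of $F_i$ after $t_{i-1}$ would revisit $t_{i-1}$, and the path must instead be rerouted by deleting the final edge $(s_i,t_{i-1})$ of $H_{i-1}$ and passing through the new middle vertex, as in the paper's Case~3.b.
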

\begin{proof}[sketch]
Let $G$ be a primary $st$-outerpath and $F_1,\dots,F_k$ be its $st$-fan decomposition. Let $\overline{P}$ be the dual path of $G$. 
%Intuitively, the subhamiltonian path of $G$ follows the path $\overline{P}$. In the following we describe this construction formally. 
Let $G_i$ be the subgraph of $G$ composed by $F_1,\dots,F_i$, $i=1,\dots,k$, therefore $G=G_k$. We construct the subhamiltonian path $H_i$ in $G_i$ by induction on $i$, assuming the next invariants for $H_{i-1}$ in $G_{i-1}$:

\begin{enumerate}
\item[$\mathcal{I}1$] Subhamiltonian path $H_{i-1}$ in $G_{i-1}$ terminates with the edge $(s_i,t_{i-1})$.
\item[$\mathcal{I}2$] Path $H_{i-1}$ crosses the edge $(s_{i-1},t_{i-1})$ (in a point referred to as $p_{i-1}$) if and only if $F_{i-1}$ is two-sided. No other edge of $F_{i-1}$ is crossed by $H_{i-1}$.
\item[$\mathcal{I}3$] $H_{i-1}$ does not create nesting sub-edges in $G_{i-1}$.  
\end{enumerate}

\noindent We show how to construct $H_i$ so to maintain the invariants. We have three cases based on whether $F_{i-1}$ and $F_i$ are two-sided or not. 

\noindent \paragraph{Case 1:} \textbf{both $F_{i-1}$ and $F_i$ are two-sided.}  Refer to Fig.~\ref{fi:fan_proof}.a.
Consider the dual path $\overline{P}$ in $F_i$ and let $f_1,\dots,f_a$, be the faces of $F_i$ as visited by $\overline{P}$. 
By Property~\ref{lemma:properties_outerpl}(d), since $F_i$ is  two-sided, faces $f_1,\dots,f_{a-1}$, $a\geq 2$,  lie on one side of $(s_i,t_i)$ and only the face $f_a$ lies on the other side of $(s_i,t_i)$.
Note that, by Properties~\ref{lemma:properties_outerpl}(b) and~\ref{lemma:properties_outerpl}(c),  $F_{i-1}$ and $F_i$ share $(s_{i},t_{i-1})$ and $P$ enters $F_i$ through $(s_{i},t_{i-1})$; it follows that $\mi(f_1)=t_{i-1}$. 
By induction hypothesis $H_{i-1}$ terminates at $(s_{i},t_{i-1})$. Therefore, we can set path $H_i$ to be $H_{i-1}$ concatenated with $\mi(f_1),\dots,\mi(f_a),t_i$. Note that $\mi(f_a)=s_{i+1}$, thus Invariant $\mathcal{I}1$ holds. 
Also, $H_i$ crosses $(s_i,t_i)$ and no other edge of $F_i$, hence Invariant $\mathcal{I}2$ holds as well.
Finally, concerning the only two edges of  $F_{i-1}$ and $F_i$ that are crossed by $H_i$, the order in which their end-vertices $s_i$, $t_i$, $s_{i-1}$, and $t_{i-1}$ and their crossing points $p_{i-1}$ and $p_{i}$ are visited is $s_{i-1},p_{i-1},s_i,t_{i-1},p_i,t_i$, which implies that their sub-edges do not nest. No other sub-edge is created by $H_i$, thus $\mathcal{I}3$ holds. 
\paragraph{Case 2:}  \textbf{$F_{i-1}$ is one-sided and $F_i$ is two-sided.}  Again, by Property~\ref{lemma:properties_outerpl}(c),  $H_{i-1}$ ends with $(s_{i},t_{i-1})$. Consider the dual path $\overline{P}$ in $F_i$ and let $f_1,\dots,f_a$, be the faces of $F_i$ as visited by $P$. 
By Property~\ref{lemma:properties_outerpl}(d), since $F_i$ is a two-sided, faces $f_1,\dots,f_{a-1}$, $a\geq 2$ lie on one side of $(s_i,t_i)$, and only the face $f_a$ lies on the other side of $(s_i,t_i)$. Faces $f_1,\dots,f_{a-1}$ lie on the same side of $(s_i,t_i)$ where $(s_i,t_{i-1})$ lies, as $\overline{P}$ enters $F_i$ through $(s_i,t_{i-1})$. Therefore, we can concatenate $H_{i-1}$ with $\mi(f_1)=t_{i-1},\dots,\mi(f_a),t_i$; see Fig.~\ref{fi:fan_decomposition}.b. Also in this case $\mi(f_a)={s_{i+1}}$ and therefore Invariant $\mathcal{I}1$ holds. Further, path $H_i$ crosses $(s_i,t_i)$ and Invariant $\mathcal{I}2$ also holds. Since $H_{i-1}$ has no nesting sub-edges, and any sub-edge in $G_{i-1}$ ends before the sub-edge of $F_i$ starts (since $t_{i-2}$ is connected by a directed path to $s_i$) we have that $H_i$ does not have nesting sub-edges and thus also $\mathcal{I}3$ holds.  

\noindent\paragraph{Case 3:}  \textbf{$F_i$ is one-sided.}  We distinguish two sub-cases depending on whether $t_{i-1}$ coincides with $t_{i}$ or not.

\noindent \paragraph{Case 3.a:} $t_{i-1} \neq t_i$ (refer to
Fig.~\ref{fi:fan_decomposition}.c-d). Since $H_{i-1}$ ends with $(s_{i,t_{i-1}})$, we can concatenate $H_{i-1}$ with $\mi(f_1)=t_{i-1},\dots,\mi(f_a),t_i$.  Also in these cases $\mi(f_a)=s_{i+1}$ and Invariant $\mathcal{I}1$ holds. Since $F_i$ is one-sided, Invariant $\mathcal{I}2$ trivially holds. Finally, since $F_i$ is one-sided, all the vertices $\mi(f_1)=t_{i-1},\dots,\mi(f_a),t_i$ lie on the same side as $(s_i,t_{i-1})$ and therefore path $H_i$ does not cross any edge of $F_i$. This, and the induction hypothesis $\mathcal{I}3$ imply that $H_i$ does not have nesting sub-edges.

\noindent \paragraph{Case 3.b:} $t_{i-1} = t_i$ (refer to
Fig.~\ref{fi:fan_decomposition}.e).In this case $a=1$, i.e., $F_i$ consists of a single triangle $s_{i}, \mi(f_1), t_i$. We remove the edge $(s_{i},t_{i-1})$ from $H_{i-1}$ and extends the resulting path with the two edges $(s_{i},\mi(f_1))$ and $(\mi(f_1),t_i)$. Invariant $\mathcal{I}1$ holds by construction; Invariant $\mathcal{I}2$ trivially holds because $F_i$ is one-sided; finally, Invariant $\mathcal{I}3$ holds because it held for $H_{i-1}$ and the extension to $H_i$ cannot create any nesting.
\end{proof}

\begin{figure}[t]
\centering
\includegraphics[width=0.9\textwidth, page=3]{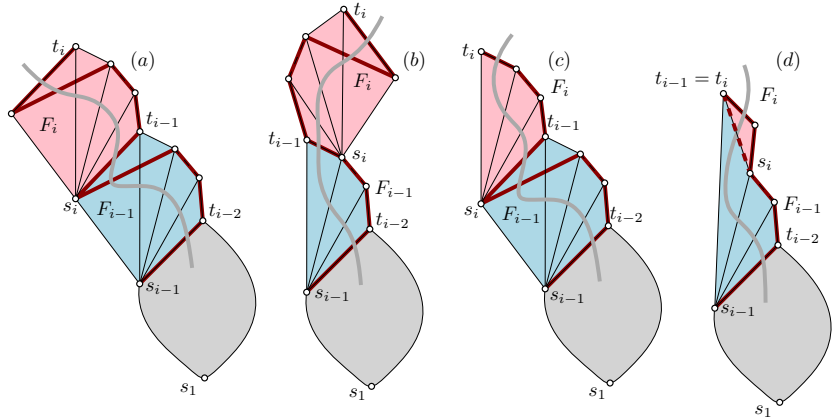}
\caption{Proof of \Cref{lemma:st-outerpath}: (a) Case 1. (b-d) Case 2, 3.a, and 3.b. The subhamiltonian path $H_i$ is drawn in dark red.}
\label{fi:fan_proof}
\end{figure}

%The previous result can be extended to $st$-outerplanar graphs.

\begin{lemma}\label{le:outerplanar}
Every st-outerplanar graph has an HP-completion without nesting sub-edges. 
\end{lemma}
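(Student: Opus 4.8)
The plan is to lift \Cref{lemma:st-outerpath} from primary $st$-outerpaths to arbitrary $st$-outerplanar graphs, arguing by induction on the number of vertices and exploiting the core/appendage decomposition. Let $G$ be an $st$-outerplanar graph and let $G_{core}$ be its core; recall that $G_{core}$ is a primary $st$-outerpath and that $G$ is obtained from $G_{core}$ by gluing, along some of its outer edges, a family of appendages $A_{u_1v_1},\dots,A_{u_rv_r}$, where each $A_{u_jv_j}$ is a one-sided $u_jv_j$-outerplanar graph meeting $G_{core}$ exactly in the edge $(u_j,v_j)$. If $r=0$ then $G=G_{core}$ and the statement is \Cref{lemma:st-outerpath}; this is the base case.

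For the inductive step I would single out one appendage $A=A_{uv}$, let $G'$ be the graph obtained from $G$ by deleting all internal vertices of $A$ (while keeping the edge $(u,v)$), and observe that $G'$ is $st$-outerplanar, that $A$ is $uv$-outerplanar, and that both have strictly fewer vertices than $G$ (this uses that $G_{core}$ has at least three vertices and that an appendage has at least one internal vertex). By the inductive hypothesis, $G'$ has an HP-completion $\tilde{G}'$ with a subhamiltonian $st$-path $H'$ and no nesting sub-edges, and $A$ has an HP-completion $\tilde{A}$ with a subhamiltonian $uv$-path $H_A=(u=y_1,y_2,\dots,y_\ell=v)$ and no nesting sub-edges. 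What remains is to stitch the two completions together along the edge $(u,v)$.

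Suppose, as discussed below, that $(u,v)$ is a path edge of $H'$. Then I would obtain a subhamiltonian path of $G$ by replacing that edge with the detour $y_1\to y_2\to\dots\to y_\ell$, adding to the completion the dummy edges of $\tilde{A}$ together with the edge $(u,v)$ itself, the latter re-routed as a chord; all these new (sub-)edges are drawn inside the thin pocket delimited by $(u,v)$ and $H_A$, which in the outerplanar embedding of $G$ contains no further vertex. Verifying the no-nesting property is then short: the internal vertices of $A$ occupy a contiguous interval strictly between $u$ and $v$ in the new spine order, so every new (sub-)edge stays inside that interval and can therefore nest only with another (sub-)edge of $\tilde{A}$ -- excluded by the inductive hypothesis on $A$ after possibly reflecting $\tilde{A}$ across the spine -- or with the re-routed edge $(u,v)$, with which it shares an endpoint and which we place on the page avoided by the edges of $\tilde{A}$ nested under it; meanwhile, the relative order and page assignment of all (sub-)edges inherited from $\tilde{G}'$ are untouched, hence create no new nesting. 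So the resulting HP-completion of $G$ has no nesting sub-edges.

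I expect the genuine obstacle to be securing the assumption that $(u,v)$ is a path edge of $H'$, i.e.\ of the subhamiltonian path produced for $G'$ and, recursively, for $G_{core}$ by \Cref{lemma:st-outerpath}. This calls for a refinement of that lemma's fan-by-fan construction: within each $s_it_i$-fan the path already runs along the rim, so every rim edge is a path edge, and the edges $(s_i,t_{i-1})$ shared by consecutive fans are path edges by Invariant $\mathcal{I}1$; the remaining case is the apex edge $(s_i,t_i)$ of a one-sided or a two-sided fan, for which one would argue -- re-choosing when necessary the $st$-fan decomposition or the dual path of $G_{core}$, which by Property~\ref{lemma:properties_outerpl} does not depend on the appendages -- that an appendage attached there can likewise be accommodated, either by turning $(s_i,t_i)$ into a path edge or, more generally, by merely requiring that $(u,v)$ be left uncovered on one of the two pages so that the chord above nests with nothing. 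Once this structural strengthening of \Cref{lemma:st-outerpath}, and hence of the inductive hypothesis, is established, the induction closes.
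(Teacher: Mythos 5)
Your strategy differs from the paper's: you peel off one appendage at a time and splice together two independently obtained HP-completions (one for $G'$, one for the appendage $A$), whereas the paper extends the single subhamiltonian path of $G_{core}$ given by \Cref{lemma:st-outerpath} so that it sweeps through each appendage \emph{without crossing any of its edges} -- the internal vertices of $A$ are inserted contiguously immediately after $u$ or immediately before $v$ (or split between the two positions), with $A$ flipped into the internal face of $G_{core}$ incident to the attachment edge. In the paper's construction the set of crossed edges, and hence the set of sub-edges and their relative spine order, is exactly that of $G_{core}$, so the no-nesting property is inherited verbatim from \Cref{lemma:st-outerpath} with nothing further to check.

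The decisive problem with your route is the hypothesis you yourself flag: that $(u,v)$ is a path edge of $H'$. This is not guaranteed by \Cref{lemma:st-outerpath} as stated, and it genuinely fails for some outer edges of $G_{core}$ (e.g., the edge $(s_k,t_k)$ of a one-sided last fan, which the path never traverses), so your splicing step has no foundation without the ``structural strengthening'' you only gesture at; re-choosing the decomposition or dual path is not shown to produce it, and your fallback (leaving $(u,v)$ uncovered on one page) abandons the replace-the-path-edge mechanism that the rest of your argument relies on. A second, unacknowledged gap is the claim that the new (sub-)edges confined to the interval between $u$ and $v$ can nest only with one another or with the re-routed chord $(u,v)$: a sub-edge of $\tilde{G}'$ with one endpoint at or before $u$ and the other at or after $v$, lying on the same page, would nest around every sub-edge contributed by $\tilde{A}$, and you give no argument excluding such a covering sub-edge. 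Both difficulties evaporate in the paper's approach precisely because the appendage contributes no sub-edges at all.
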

\begin{proof}
Let $G$ be an $st$-outerplanar graph and let $G_{core}$ be the core of $G$. By Lemma~\ref{lemma:st-outerpath}, $G_{core}$  has an HP-completion with subhamiltonian path $H'$ that does not create nesting sub-edges. 
By Property~\ref{lemma:properties_outerpl}(a), every outer edge of $G_{core}$ is potentially an attachment edge of an appendage of $G$. We expand the subhamiltonian path $H'$ of $G_{core}$ to a subhamiltonian path $H$ in $G$ as follows, refer to Fig.~\ref{fi:appendages}. Let $A$ be an appendage of $P$ attached to an edge $e$ and let $f$ be the internal face of $G_{core}$ incident to the edge $e$. We flip $A$ to lie inside $f$. We visit all the vertices of $A$ that are not the source or the sink of $A$ either immediately after $H'$ visits the source of $A$ (blue appendage in Fig.~\ref{fi:appendages}.c) or immediately before it visits the sink of $A$ (pink appendage in Fig.~\ref{fi:appendages}.c), or both things at the same time (green appendages in Fig.~\ref{fi:appendages}.c).  After this procedure the edges crossed by $H$ are exactly the edges of $G_{core}$ crossed by $H'$, i.e., no new sub-edge is created. Further, the vertices of $G_{core}$ are visited in the same order by $H$ and by $H'$. Hence, since $H'$ did not create nesting sub-edges in $G_{core}$, so does $H$ in $G$. 
\end{proof}

\begin{figure}[t]
\centering
\includegraphics[width=0.7\textwidth, page=4]{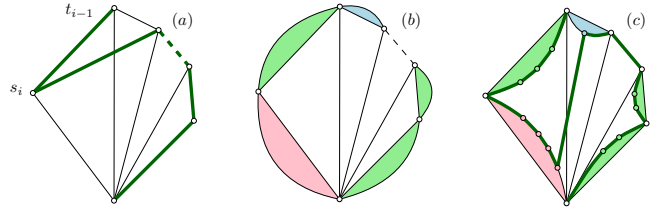}
\caption{Augmenting the subhamitonian path to visit appendages.}
\label{fi:appendages}
\end{figure}

By \Cref{le:outerplanar,le:hp-completion} every $st$-outerplanar graph has a nice 2UTBE with at most one spine-crossing per edge. By \Cref{le:UPSE-sufficiency} we have the following.

\begin{theorem}\label{th:st-outerplanar}
    Every $st$-outerplanar graph admits a $1$-bend UPSE on every UOSC point~set.
\end{theorem}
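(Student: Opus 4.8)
The plan is to chain together the results already established in the excerpt. The statement ``Every $st$-outerplanar graph admits a $1$-bend UPSE on every UOSC point set'' follows by a short three-step argument, so my proof proposal is essentially a roadmap through the lemmas rather than new work.

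First I would invoke \Cref{le:outerplanar}, which gives that every $st$-outerplanar graph $G$ has an HP-completion whose subhamiltonian path creates no nesting sub-edges. Next, by \Cref{le:hp-completion}, an HP-completion is equivalent to a 2UTBE of $G$ with at most one spine crossing per edge, and the vertex order along the spine matches the order along the subhamiltonian path; since the path creates no nesting sub-edges, neither does the corresponding 2UTBE. I would then observe that a 2UTBE with at most one spine crossing per edge is in particular a single-top 2UTBE: each edge is subdivided at most once, so it has at most one top sub-edge (and at most one bottom sub-edge). Because this single-top 2UTBE has no nesting sub-edges at all, it certainly has no two sub-edges one nested inside the other, so Condition~(a) (nesting of top sub-edges) and Condition~(b) (nesting of bottom sub-edges) of the definition of a forbidden configuration can never both be satisfied — indeed neither can be satisfied. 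Hence the single-top 2UTBE is nice.

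Finally, I would apply \Cref{le:UPSE-sufficiency}: since $G$ admits a nice single-top 2UTBE, it admits a $1$-bend UPSE on every UOSC point set of the appropriate size, which is exactly the claim. The whole argument is just the composition ``no nesting $\Rightarrow$ no forbidden configuration $\Rightarrow$ nice $\Rightarrow$ 1-bend UPSE,'' with the first arrow supplied by \Cref{le:outerplanar,le:hp-completion} and the last by \Cref{le:UPSE-sufficiency}.

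There is essentially no obstacle in this final step — all the difficulty has been pushed into the earlier lemmas, in particular the inductive construction of the non-nesting subhamiltonian path along the $st$-fan decomposition in \Cref{lemma:st-outerpath} and its extension over appendages in \Cref{le:outerplanar}, and into the slope-assignment machinery of \Cref{le:nice-assignment,le:cover-1,le:UPSE-sufficiency}. The only minor point to be careful about is the bookkeeping that ``no nesting sub-edges'' is strictly stronger than ``no forbidden configuration,'' so that \Cref{le:UPSE-sufficiency} indeed applies; this is immediate once one checks that a forbidden configuration requires, by Conditions~(a) and~(b), a pair of nested top sub-edges and a pair of nested bottom sub-edges, both of which are excluded.
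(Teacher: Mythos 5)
Your proposal is correct and follows exactly the paper's own (one-line) proof: chain \Cref{le:outerplanar} and \Cref{le:hp-completion} to get a nice single-top 2UTBE and then apply \Cref{le:UPSE-sufficiency}. One negligible imprecision: top sub-edges sharing an endpoint are ``nested inside'' one another in the sense of Condition~(a) without \emph{nesting} in the strict sense excluded by \Cref{le:outerplanar}, so Condition~(a) is not literally ruled out --- but Condition~(b) is (its bottom sub-edges may not share a vertex, hence would nest strictly), which already kills every forbidden configuration and makes your conclusion stand.
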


\section{1-bend UPSE are not always possible}\label{se:counterexample}

In this section we describe a $2$-outerplanar $st$-digraph $G$ and an UOSC point set $S$ such that $G$ does not admit a $1$-bend UPSE on $S$. An $st$-digraph is $2$-outerplanar if removing all vertices of the outer face yields an outerplanar digraph. The point set $S$ used in the proof of \Cref{le:counter-ex} is constructed in such a way that, assuming the existence of a 1-bend UPSE $\Gamma$ on $S$, no matter what is the single-top 2UTBE induced by $\Gamma$, there is always a forbidden configuration that is mapped to an impossible point set, thus implying the existence of a crossing. It is possible to define many point sets that have this property. The one shown in \Cref{fi:counter-example-1} has points with the following 
exact coordinates:

\begin{flalign*}
	&&p1&&&=&&(93,0), 
	&p2&&&=&&(79,2), 
	&p3&&&=&&(73,4), \\
	&&p4&&&=&&(16,43),
	&p5&&&=&&(8,49),
	&p6&&&=&&(4,53), \\
	&&p7&&&=&&(2,56),
	&p8&&&=&&(1,59),
	&p9&&&=&&(0,63), \\
	&&p10&&&=&&(0,67),
	&p11&&&=&&(1,71),
	&p12&&&=&&(2,74), \\
	&&p13&&&=&&(4,77),
	&p14&&&=&&(8,81),
	&p15&&&=&&(16,87), \\
	&&p16&&&=&&(73,126),
	&p17&&&=&&(79,128),
	&p18&&&=&&(93,130).
\end{flalign*}

\begin{figure}[t]
    \centering
    \subfigure[]{\label{fi:counter-example-a}\includegraphics[width=0.44\textwidth, page=1]{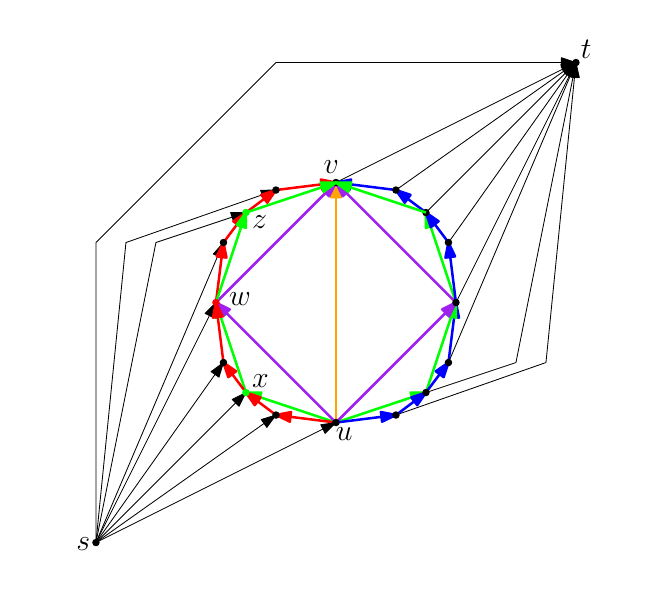}}
    \hfil
    \subfigure[]{\label{fi:counter-example-b}\includegraphics[width=0.44\textwidth, page=2]{figures/counter-example.pdf}}
    \caption{(a) An $st$-digraph $G$ and (b) an UOSC point set $S$ for the proof of \cref{le:counter-ex}}
    \label{fi:counter-example-1}
\end{figure}

\begin{restatable}[{$\star$}]{lemma}{counterEx}\label{le:counter-ex}
There exists a $2$-outerplanar $st$-digraph $G$ and an UOSC point set $S$ such that $G$ does not admit a $1$-bend UPSE on $S$.
\end{restatable}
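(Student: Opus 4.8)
The plan is to exhibit a concrete $2$-outerplanar $st$-digraph $G$ on $18$ vertices together with the UOSC point set $S$ whose coordinates are listed above, and then argue that no $1$-bend UPSE of $G$ on $S$ can exist. The structure of the argument is dictated by \Cref{le:UPSE-necessity,le:forbidden}: if a $1$-bend UPSE $\Gamma$ existed, then $\Gamma$ would induce a single-top 2UTBE $\gamma$ of $G$; we want to show that \emph{every} single-top 2UTBE of $G$ necessarily contains one of the four types of forbidden configurations, and moreover that the seven (or six) vertices defining that configuration are forced by $\Gamma$ to be mapped to an impossible point set of the corresponding type. By \Cref{le:forbidden} this yields a crossing in $\Gamma$, a contradiction. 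So the proof is really a combinatorial case analysis over the possible single-top 2UTBEs of $G$, combined with a geometric check that $S$ realizes an impossible point set on the relevant subsets of points.

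The key steps, in order, are as follows. First, I would describe $G$ explicitly (Fig.~\ref{fi:counter-example-a}): it is an $st$-digraph built so that its ``interior'' is itself an outerplanar $st$-digraph, with the outer face vertices arranged so that $s=p_1$ and $t=p_{18}$ and the graph forces a rigid interleaving structure on any topological ordering. Second, since any single-top 2UTBE of $G$ is (by \Cref{le:hp-completion}) equivalent to an HP-completion, I would enumerate the essentially different subhamiltonian paths of $G$ — using the $st$-digraph structure to bound the number of cases — and for each one identify which pair of edges is forced to form a forbidden configuration, tracking precisely which vertices play the roles $p_1,\dots,p_7$ in the definition. Third, for each such forced configuration I would verify, using the given coordinates, that the corresponding subset of points of $S$ forms an impossible point set of the matching type; concretely this means checking that the cones $C_1,C_2$, the half-planes $H_6,H_7$, and the truncated cones $T_1,T_2$ are in the ``crossing'' position — a finite set of sign computations on the explicit integer coordinates (the coordinates were clearly chosen to make these checks go through, e.g.\ the near-collinear clusters $\{p_1,p_2,p_3\}$, $\{p_4,\dots,p_6\}$, the symmetric cluster near the middle, etc.). Fourth, I would invoke \Cref{le:forbidden} to conclude that $\Gamma$ has a crossing, contradicting planarity of the UPSE, and hence no $1$-bend UPSE of $G$ on $S$ exists.

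The main obstacle I expect is the case analysis in the second step: showing that \emph{no} single-top 2UTBE of $G$ avoids all forbidden configurations whose defining vertices land on impossible point positions. This requires carefully designing $G$ so that its combinatorial structure is rigid enough — the two-sided appendages and the nesting constraints must leave essentially only a bounded number of inequivalent HP-completions, each trapped. In practice one exploits the fact that an $st$-digraph has a unique source and sink and that the Hamiltonian path must respect the (essentially unique, up to the choices forced by the $2$-outerplanar gadgets) topological order; the ``degrees of freedom'' are then which side of the spine each crossed edge's sub-edges go to, and the gadget is engineered so every choice produces a nested pair of top sub-edges and a correspondingly placed pair of bottom sub-edges satisfying conditions (a)–(c) of a forbidden configuration. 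The geometric verifications in the third step are routine given the explicit coordinates, but tedious; one would present them as a table of the relevant cross-product signs rather than in prose. Finally, a minor point to handle with care is the ``single-top'' reduction: we must make sure that the simplifications allowed in going from $\Gamma$ to its induced 2UTBE (removing unnecessary sub-edges, the at-most-one-top-sub-edge property) do not destroy the configuration we rely on — this is exactly why \Cref{le:UPSE-necessity} is stated with the induced 2UTBE, and we lean on it here.
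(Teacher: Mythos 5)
Your plan coincides with the paper's proof: the paper takes the graph of \Cref{fi:counter-example-a} (whose distinguished two-sided edge $(u,v)$ and boundary paths $\pi_l,\pi_r$ rigidify every topological order) together with the listed point set, and shows by a case analysis --- on whether $(u,v)$ crosses the spine once or twice, and on how the vertices of $\pi_l$ distribute between the resulting spine intervals --- that every single-top 2UTBE contains a forbidden configuration whose defining vertices are forced onto an impossible point subset of $S$, concluding via \Cref{le:UPSE-necessity} and \Cref{le:forbidden} exactly as you propose. The one caveat is that your second step is the entire substance of the argument (in the paper it is Case~1 plus six sub-cases of Case~2), so until that finite enumeration is actually carried out for the concrete $G$, what you have is a correct blueprint rather than a finished proof.
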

\begin{proof}
Let $G$ be the $st$-digraph of \Cref{fi:counter-example-a} and let $S$ be the point set of \Cref{fi:counter-example-b} %(also described in Appendix~\ref{app:counterexample:description}). 
By \Cref{le:UPSE-necessity}, if $G$ has a 1-bend UPSE $\Gamma$ on $S$, then $\Gamma$ induces a single-top  2UTBE.  We show that every single-top 2UTBE $\gamma$ of $G$ has a forbidden configuration of Type $i$, for some $i \in \{1,2,3,4\}$, that is necessarily mapped to a Type $i$ impossible point subset of $S$. By \Cref{le:forbidden} a 1-bend UPSE cannot exist. Let $p_1, p_2,\dots, p_{18}$ be the points of $S$ in bottom-to-top~order.
%
%Digraph $G$ has two upward planar embeddings: the one in \Cref{fi:counter-example-a} and the one with $(s,t)$ flipped. 
Let $\pi_{l}$ be the path from $u$ to $v$ to the left of $(u,v)$ (red in \Cref{fi:counter-example-a}) and let $\pi_r$ be the path from $u$ to $v$ to the right of $(u,v)$ (blue in \Cref{fi:counter-example-a}). The edge $(u,v)$ (yellow in \Cref{fi:counter-example-a}) has vertices on both sides. Thus, in every 2UTBE it crosses the spine either once or twice and the vertices of $\pi_l$ must appear along the spine in the order they appear along $\pi_l$; the same holds for $\pi_r$. We have different cases. In each case we denote by $v_1, v_2, \dots,v_n$ the sequence of vertices along the spine (thus vertex $v_i$ is mapped to point $p_i$). In all cases $u$ is mapped to $p_2$ and $v$ is mapped to $p_{17}$.

\begin{figure}[t]
    \centering
    \subfigure[]{\label{fi:counter-example-c}\includegraphics[width=0.44\textwidth, page=3]{figures/counter-example.pdf}}
    \subfigure[]{\label{fi:counter-example-d}\includegraphics[width=0.44\textwidth, page=4]{figures/counter-example.pdf}}
    \subfigure[]{\label{fi:counter-example-e}\includegraphics[width=0.48\textwidth, page=5]{figures/counter-example.pdf}}
    \subfigure[]{\label{fi:counter-example-f}\includegraphics[width=0.48\textwidth, page=6]{figures/counter-example.pdf}}
    \caption{Case 1 of \Cref{th:counter-ex}.}
    \label{fi:counter-example-2}
\end{figure}%\fm{Figures 7-8-9 have very small labels}

\begin{figure}[t!]
    \centering
    \subfigure[]{\label{fi:counter-example-g}\includegraphics[width=0.48\textwidth, page=7]{figures/counter-example.pdf}}
    \subfigure[]{\label{fi:counter-example-h}\includegraphics[width=0.48\textwidth, page=8]{figures/counter-example.pdf}}
    \caption{Case 2.A of \Cref{th:counter-ex}.}
    \label{fi:counter-example-3}
\end{figure}

\noindent \textbf{Case 1:} Edge $(u,v)$ crosses the spine once (see \cref{fi:counter-example-c,fi:counter-example-e}). The first sub-edge of $(u,v)$ is either a bottom or a top sub-edge. Int the first (resp. second) case the vertex $w$ coincides with $v_6$ (resp. $v_13$) and the edges $(w,v)$ and $(u,v)$ form a Type 2 (resp. a Type 4) forbidden configuration with the spine crossings between $v_9$ and $v_{10}$. Since, $p_2,p_6,p_9,p_{10},p_{16},p_{17}$ (resp. $p_2,p_3,p_9,p_{10},p_{13},p_{17}$) form a Type 2 (resp. a Type 4) impossible point set (see \Cref{fi:counter-example-d,fi:counter-example-f}), by \Cref{le:forbidden} a 1-bend UPSE cannot exist in this case.

\noindent \textbf{Case 2:} Edge $(u,v)$ crosses the spine twice. In this case $(u,v)$ consists of three sub-edges $(u,d_1)$, $(d_1,d_2)$, and $(d_2,v)$, where $d_1$ and $d_2$ are spine crossings. Only $(d_1,d_2)$ is a top sub-edge. Thus, the vertices of $\pi_l$ have to be distributed in the two intervals defined by $(u,d_1)$ and $(d_2,v)$. We distinguish six sub-cases (five are omitted)  depending on the distribution of the vertices of $\pi_l$.

\noindent \textit{Case 2.A:} $w$ is between $u$ and $d_1$ with a single vertex of $\pi_l$ between $d_2$ and $v$ (see \cref{fi:counter-example-g}). In this case $w$ coincides with $v_6$ and both $(w,v)$ and $(u,v)$ cross the spine between $v_8$ and $v_9$ and between $v_{15}$ and $v_{16}$. The edges $(w,v)$ and $(u,v)$ form a Type 2 forbidden configuration. Since $p_2,p_6,p_8,p_{9},p_{15},p_{17}$ form a Type 2 impossible point set (see \Cref{fi:counter-example-h}), by \Cref{le:forbidden} a 1-bend UPSE cannot exist.

\noindent \textit{Case 2.B:} Vertex $w$ is between $u$ and $d_1$ and there are two vertices of $\pi_l$ between $d_2$ and $v$ (see \cref{fi:counter-example-i}). In this case $w$ coincides with $v_6$ and both edges $(w,v)$ and $(u,v)$ cross the spine once between $v_7$ and $v_8$ and another time between $v_{14}$ and $v_{15}$.  Also in this case the edges $(w,v)$ and $(u,v)$ form a Type 2 forbidden configuration. Since $p_2,p_6,p_7,p_{8},p_{14},p_{17}$ form a Type 2 impossible point set (see \Cref{fi:counter-example-j}), by \Cref{le:forbidden} a 1-bend UPSE cannot exist in this case.

\begin{figure}[t]
	\centering
	\subfigure[]{\label{fi:counter-example-i}\includegraphics[width=0.48\textwidth, page=9]{figures/counter-example.pdf}}
	\subfigure[]{\label{fi:counter-example-j}\includegraphics[width=0.48\textwidth, page=10]{figures/counter-example.pdf}}
	\caption{Case 2.B of \Cref{th:counter-ex}.}
	\label{fi:counter-example-4}
\end{figure}

\noindent \textit{Case 2.C:} Vertex $w$ is between $u$ and $d_1$ and there are three vertices of $\pi_l$ between $d_2$ and $v$ (see \cref{fi:counter-example-k}). In this case $w$ coincides with $v_6$ and the vertex $z$ coincides with $v_{15}$. Edges $(w,v)$ and $(w,z)$ form a Type 4 forbidden configuration with spine crossings between $v_{13}$ and $v_{14}$. Since $p_6,p_13,p_{14},p_{15},p_{17}$ form a Type 4 impossible point set (see \Cref{fi:counter-example-l}), by \Cref{le:forbidden} a 1-bend UPSE cannot exist in this case.

\begin{figure}[h!]
	\centering
	\subfigure[]{\label{fi:counter-example-k}\includegraphics[width=0.48\textwidth, page=11]{figures/counter-example.pdf}}
	\subfigure[]{\label{fi:counter-example-l}\includegraphics[width=0.48\textwidth, page=12]{figures/counter-example.pdf}}
	\caption{Case 2.C of \Cref{th:counter-ex}.}
	\label{fi:counter-example-5}
\end{figure}

\noindent \textit{Case 2.D:} vertex $w$ is between $d_2$ and $v$ and there is only one vertex of $\pi_l$ between $u$ and $d_1$. The proof is symmetric to the Case 2.A (see \cref{fi:counter-example-m,fi:counter-example-n}).

\noindent \textit{Case 2.E:} vertex $w$ is between $d_2$ and $v$ and there are two vertices of $\pi_l$ between $u$ and $d_1$. The proof is symmetric to the Case 2.B (see \cref{fi:counter-example-o,fi:counter-example-p}).

\noindent \textit{Case 2.F:} vertex $w$ is between $d_2$ and $v$ and there are three vertices of $\pi_l$ between $u$ and $d_1$. The proof is symmetric to the Case 2.C (see \cref{fi:counter-example-q,fi:counter-example-r}).

\begin{figure}[htbp]
	\centering
	\subfigure[]{\label{fi:counter-example-m}\includegraphics[width=0.48\textwidth, page=13]{figures/counter-example.pdf}}
	\subfigure[]{\label{fi:counter-example-n}\includegraphics[width=0.48\textwidth, page=14]{figures/counter-example.pdf}}
	\subfigure[]{\label{fi:counter-example-o}\includegraphics[width=0.48\textwidth, page=15]{figures/counter-example.pdf}}
	\subfigure[]{\label{fi:counter-example-p}\includegraphics[width=0.48\textwidth, page=16]{figures/counter-example.pdf}}
	\subfigure[]{\label{fi:counter-example-q}\includegraphics[width=0.48\textwidth, page=17]{figures/counter-example.pdf}}
	\subfigure[]{\label{fi:counter-example-r}\includegraphics[width=0.48\textwidth, page=18]{figures/counter-example.pdf}}
	\caption{Case 2.D--2.F of \Cref{th:counter-ex}.}
	\label{fi:counter-example-6}
\end{figure}

\end{proof}

The following theorem is easily derived from \Cref{le:counter-ex} by suitably adding, for every $n \geq 18$, $n-18$ vertices to $G$ and $n-18$ points to $S$.

\begin{restatable}{theorem}{thcounterex}\label{th:counter-ex}
For every $n \geq 18$ there exists an $n$-vertex $2$-outerplanar $st$-digraph $G$ and an UOSC point set $S$ such that $G$ does not admit a $1$-bend UPSE on $S$.    
\end{restatable}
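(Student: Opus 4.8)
The plan is to bootstrap the fixed example of \Cref{le:counter-ex}. Write $G_{18}$ for the $18$-vertex $2$-outerplanar $st$-digraph and $S_{18}=\langle p_1,\dots,p_{18}\rangle$ (in bottom-to-top order) for the UOSC point set it provides; let $s_0$ and $t_0$ be the source and the sink of $G_{18}$. Fix $n\ge 18$ and set $m=n-18$. I would build $G$ from $G_{18}$ by attaching a directed path $t_0=a_0\to a_1\to\cdots\to a_m$ on $m$ new vertices, routed inside the outer face of $G_{18}$, and build $S$ from $S_{18}$ by adding $m$ new points $q_1,\dots,q_m$ placed above $p_{18}$ and far to the right, on a convex chain continuing the boundary of $CH(S_{18})$, chosen so that $S=\langle p_1,\dots,p_{18},q_1,\dots,q_m\rangle$ is again a UOSC point set whose $18$ lowest points are exactly $S_{18}$.

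First I would check that $G$ is an $n$-vertex $2$-outerplanar $st$-digraph. It has the unique source $s_0$ and the unique sink $a_m$. Moreover, since the attached path lies in the outer face of $G_{18}$ and meets $G_{18}$ only at $t_0$, the vertices on the outer face of $G$ are exactly those on the outer face of $G_{18}$ together with $a_1,\dots,a_m$; hence deleting the outer face of $G$ returns precisely $G_{18}$ with its outer face deleted, which is outerplanar because $G_{18}$ is $2$-outerplanar.

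Next I would rule out a $1$-bend UPSE of $G$ on $S$. Suppose $\Gamma$ were such a drawing. In every topological order of $G$, each vertex of $G_{18}$ precedes $a_0=t_0$ (it has a directed path to the unique sink of $G_{18}$) and $a_0\prec a_1\prec\cdots\prec a_m$, so by $y$-monotonicity $y(a_j)>y(v)$ for all $j\in\{1,\dots,m\}$ and all $v\in V(G_{18})$. Since the bottom-to-top order of $S$ is $p_1,\dots,p_{18},q_1,\dots,q_m$, the vertices $a_1,\dots,a_m$ must be placed on $q_1,\dots,q_m$, hence the vertices of $G_{18}$ are placed on $p_1,\dots,p_{18}=S_{18}$. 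The restriction of $\Gamma$ to $G_{18}$ is then a planar, $y$-monotone drawing with at most one bend per edge and every vertex on a point of $S_{18}$, i.e.\ a $1$-bend UPSE of $G_{18}$ on $S_{18}$, contradicting \Cref{le:counter-ex}.

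I expect the only delicate part to be the placement of the new points, which must keep $S$ one-sided convex (so $p_1$ is still adjacent on the convex hull to the topmost point $q_m$, with all points to the left of the line $p_1q_m$) while recovering $S_{18}$ verbatim as the $18$ lowest points; placing the $q_i$ above $p_{18}$ and sufficiently far to the right on a slightly convex chain does this, and if the explicit bookkeeping feels awkward one may first replace $S_{18}$ by its image under a map $(x,y)\mapsto(ax+by+c,\,dy+e)$ with $a,d>0$ --- such maps preserve both one-sided convexity and the impossibility of a $1$-bend UPSE --- shrinking $S_{18}$ so that all the room above it is free. The graph-side change and the contradiction argument are otherwise immediate.
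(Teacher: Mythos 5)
Your proposal is correct and follows essentially the same route as the paper: there, the $18$-vertex example of \Cref{le:counter-ex} is likewise padded with a pendant directed path (attached below the source rather than above the sink, with each new vertex additionally joined to $u$ and $t$) and $S$ is augmented with $n-18$ extreme (lowest) points, after which any $1$-bend UPSE of the padded instance restricts to one of the original. Your explicit argument that upwardness forces the new vertices onto the new extreme points, and your care about keeping the augmented set one-sided convex, simply flesh out the paper's ``it is easy to see'' step.
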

% \begin{proof}
%     For every $n > 18$ we can transform the $st$-digraph $G$ of \Cref{le:counter-ex} to an $st$-digraph $G'$ with $n$ vertices as follows. We add to $G$ a directed path $s_0 s_1 \dots s_{n'+1}$, with $n'=n-18$, so that $s_{n'+1}$ coincides with the single source $s$ of $G$; we then connect each $s_i$ to the vertices $u$ and $t$ of $G$, for  $i=1,2,n'$. The resulting digraph is an $st$-digraph. We also transform the point set $S$ into a point set $S'$, by adding $n'$ points that form a one-sided convex point set together with $S$ and such that the added points are the lowest of $S'$. It is easy to see that if $G'$ admits a $1$-bend UPSE of $S'$ the $G$ admits a $1$-bend UPSE on $S$.      
% \end{proof}
\begin{proof}
	For every $n > 18$ we can transform the $st$-digraph $G$ of \Cref{le:counter-ex} to an $st$-digraph $G'$ with $n$ vertices as follows. We add to $G$ a directed path $s_0 s_1 \dots s_{n'+1}$, with $n'=n-18$, so that $s_{n'+1}$ coincides with the single source $s$ of $G$; we then connect each $s_i$ to the vertices $u$ and $t$ of $G$, for  $i=1,2,n'$. The resulting digraph is an $st$-digraph. We also transform the point set $S$ into a point set $S'$, by adding $n'$ points that form a one-sided convex point set together with $S$ and such that the added points are the lowest of $S'$. It is easy to see that if $G'$ admits a $1$-bend UPSE of $S'$ then $G$ admits a $1$-bend UPSE on $S$.      
\end{proof}

\section{Open problems}\label{se:open}

%We gave seminal results for \Cref{pr:one} of \Cref{se:intro}. 
Various questions  remain open related to \Cref{pr:one,pr:two} of \Cref{se:intro}. Among~them: 
%\fmr{I would rephrase: In this paper we gave seminal results for \Cref{pr:one}. Interesting questions are left open, among them:}{Done}
%
\begin{enumerate}
    \item Investigate the non-upward version of \Cref{pr:one}. We observe that the graph of \Cref{th:counter-ex} is not a counterexample for this problem as it admits a (non-upward) PSE on the set of points $S$ of \Cref{th:counter-ex} (see \Cref{fi:drawing-non-upward-of-counterexample}).
    \item Study \Cref{pr:two}. In particular, it would be nice to find a characterization of the digraphs that admit a $1$-bend UPSE on every UOSC point set.   
\end{enumerate}

%\begin{figure}[t]
%    \centering
%    %
%    \includegraphics[width=0.85\textwidth, page=1]{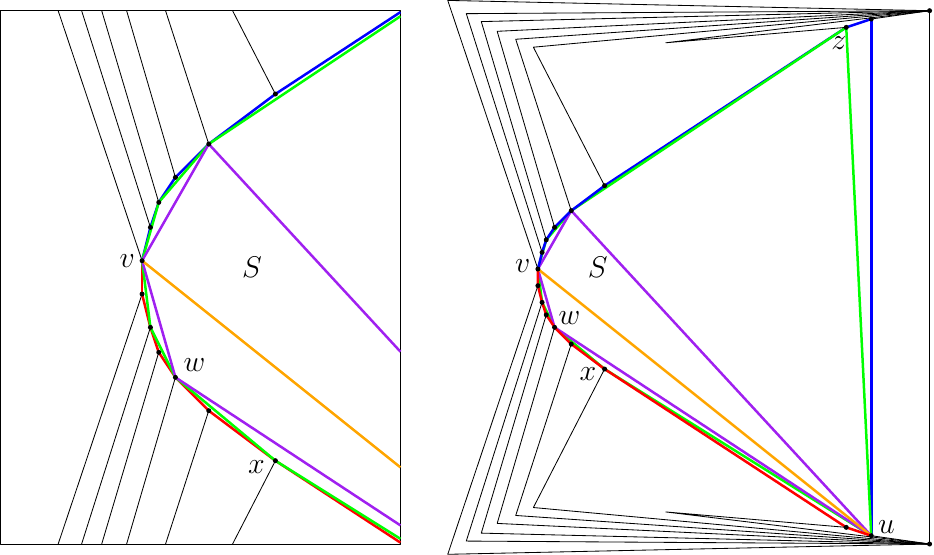}
%    \caption{A (non-upward) PSE of the graph~of~\Cref{fi:counter-example-a}~on~the~point~set~of~\Cref{fi:counter-example-b}.}
%    \label{fi:drawing-non-upward-of-counterexample}
%\end{figure}

\begin{figure}[h!]
	\centering
	\includegraphics[width=0.85\textwidth, page=1]{}
	\caption{A (non-upward) PSE of the graph~of~\Cref{fi:counter-example-a} on the point set of~\Cref{fi:counter-example-b}.}
	\label{fi:drawing-non-upward-of-counterexample}
\end{figure}

%\newpage

\bibliography{biblio}
\bibliographystyle{splncs04.bst}
\end{document}